\documentclass[prx,aps,twocolumn,notitlepage,superscriptaddress,showpacs,nofootinbib,10pt]{revtex4-2}

\usepackage[T2A]{fontenc}
\usepackage{enumerate,appendix}
\usepackage{amsmath,amsthm, amssymb,commath}
\usepackage{graphicx}
\usepackage[colorlinks]{hyperref}
\hypersetup{
	colorlinks = true,
	urlcolor = {blue},
	citecolor = {blue},
	linkcolor= {blue}
} 
\usepackage{subcaption}
\usepackage{verbatim}
\usepackage[justification=RaggedRight]{caption}

\usepackage{optidef} 
\usepackage{xspace}
\usepackage{etoolbox}

\newtoggle{science}
\togglefalse{science}

\newtheorem{theorem}{Theorem} 
\newtheorem{lemma}[theorem]{Lemma}

\def\>{\rangle} 
\def\<{\langle}

\begin{document}

\title{A theory of quantum error correction for permutation-invariant codes} 
\author{Yingkai Ouyang}
\email{y.ouyang@sheffield.ac.uk}
\affiliation{School of Mathematical and Physical Sciences, University of Sheffield, Sheffield, S3 7RH, United Kingdom} 

\author{Gavin K. Brennen}
\email{gavin.brennen@mq.edu.au}
\affiliation{Center for Engineered Quantum Systems, Dept. of Physics \& Astronomy, Macquarie University, 2109 NSW, Australia}

\affiliation{BTQ Technologies, 16-104 555 Burrard Street, Vancouver, British Columbia, Canada V7X 1M8}

\begin{abstract} 
%   
%Permutation invariant (PI) codes are non-stabilizer quantum error correction codes that offer some attractive features including:  high performance for correcting amplitude damping errors, designer bit flip/phase flip distance, and admissibility of  transversal gates outside the Clifford hierarchy. 
We present for the first time a general theory of error correction for permutation invariant (PI) codes.  Using representation theory of the symmetric group we construct efficient algorithms that can correct any correctible error on any PI code. These algorithms involve measurements of total angular momentum, quantum Schur transforms or logical state teleportations, and geometric phase gates. For erasure errors, or more generally deletion errors, on certain PI codes, we give a simpler quantum error correction algorithm.
\end{abstract}

\maketitle 

\section{Introduction}

Recent advances in quantum error correction (QEC) have resulted in increasingly sophisticated codes that promise pathways to fault-tolerant quantum processing \cite{pk22,bravyi2024high}. In concert, experimental progress has been rapid with demonstrations of beyond break-even performance in storing encoded quantum memory \cite{Acharya2025-ba} and logical non-Clifford gates \cite{dck4-x9c2,dasu2025breakingmagicdemonstrationhighfidelity} and entangling Clifford gates \cite{Bluvstein2024-ht} on encoded qubits.
To date most effort has been focused on stabilizer codes wherein the code space is stabilized by parity checks on qubits in conjugate bases. While relatively simple to work with, stabilizer codes do have some limitations. For example, they perform less well than non-stabilizer codes at correcting non-Pauli errors like amplitude damping \cite{PhysRevA.89.022316} and they cannot correct for deletion errors \cite{leahy2019quantum} which are qubit erasures at unknown qubit locations.
%The former issue, a necessary consequence of the Eastin-Knill theorem \cite{PhysRevLett.102.110502} which disallows a universal transversal gate set within any single code, is usually obviated by code switching to another stabilizer code \cite{PhysRevLett.113.080501} or using magic state preparation \cite{knill2004}.
%However, the obtainable transversal gates are still restricted to be within the Clifford hierarchy. 
The latter type of error is relevant in architectures like trapped Rydberg atom arrays where leakage can be converted to loss blindly, but detecting the location of loss requires more resources \cite{PRXQuantum.5.040343}. Also, the set of admissible transversal logical operators is limited to the Clifford hierarchy and the resource overhead to perform fault tolerant small angle rotation gates with stabilizer codes via code switching is substantial \cite{JainAlbert2025}.  

In this work, we develop a theory of quantum error correction for permutation invariant (PI) codes \cite{Rus00,PoR04,ouyang2014permutation,ouyang2015permutation,OUYANG201743,movassagh2020constructing,ouyang2025measurement} which are non-stabilizer QEC codes with a codespace comprised of strictly permutation symmetric states. There are a variety of families of such codes which share several attractive features.
First, the permutation symmetry of PI codes makes the location of single particle errors irrelevant. This can make correction of amplitude errors simpler relative to stabilizer codes and also allows for correcting for the aforementioned deletion errors, with correctability arising directly from their distance property \cite{HagiwaraISIT2020,ouyang2021permutation}.
Second, PI codes have been demonstrated to also correct against insertion errors \cite{shibayama2021equivalence,bulled2025angular}, where insertion errors models the addition of unknown qubits at unknown locations to the quantum system \cite{leahy2019quantum}, 
and are challenging for conventional stabilizer codes to handle.
Third, they have applications such as in QEC-enhanced quantum sensing \cite{ouyang2019robust,ouyang2022finite}, transversal non-Clifford gates with exotic angles of rotation that are activated by measurement-free code-switching \cite{ouyang2025measurement}, and quantum storage \cite{ouyang2019quantum}.
Fourth, PI codes been recently demonstrated to exhibit deep connections with emerging families of quantum codes, such as spin codes and bosonic codes \cite{omanakuttan2026holsteinprimakoffspincodes,aydin2025quantum,kubischta2023not,kubischta2025intrinsic,gross-code,BinomialCodes2016}.

From a controllability perspective, PI codes have the advantage that full state and unitary synthesis in the symmetric state space is achievable using global fields that collectively rotate the qubits together with a uniform coupling to a bosonic mode that mediates entangling gates \cite{PhysRevResearch.7.L022072,PhysRevA.110.062610,johnsson2020geometric}. The mode can be of motional or electro-magnetic type and this feature can considerably simplify implementations in platforms like trapped ions or Rydberg atom arrays where localized addressability, especially in large registers, is challenging due to crosstalk errors \cite{ParradoRodriguez2021crosstalk,MorgadoWhitlock2021}. 
%The ease of preparing symmetric states in actual physical systems together with the possibility of performing QEC on these states makes them a promising candidate for demonstrating quantum error correction in a variety of physical systems, such as in ion-traps, atomic gases, NV centers and ultracold atoms.

Although PI codes are attractive candidates to consider for QEC, a full theory for how to do so remains lacking.
In particular, while the optimal recovery map that performs the QEC for PI codes exists, and its Kraus operators can be written down \cite{KnL97}, it was not known how one could concretely implement these recovery operations using a sequence of simple operations.  
% PI codes, being very different from most QEC codes which are stabilizer codes, necessitates development of a new QEC theory that resolves this issue.
In this paper, we complete the theory of QEC for PI codes. Applying tools from the representation theory of the symmetric group specialised to qubits \cite{harrow2013church,havlivcek2018quantum}, we devise simple QEC operations for PI codes.
We present for the first time, efficient and near-term decoding algorithms for implementing QEC on PI codes, be it a quantum circuit model, or with near-term quantum control.
%These operations require only measurements of total angular momenta, quantum Schur transforms, and geometric phase gates. 

Our method follows a two-stage procedure.
Stage 1 projects the corrupted state onto the irreducible representations of the symmetric group by measuring the total angular momentum of subsets of qubits. The error-syndrome extracted at this point corresponds to a standard Young tableaux.
Stage 2 uses the standard Young tableaux syndrome to complete the decoding; it performs an adaptive quantum algorithm based on the standard Young tableaux syndrome that brings the state back to the codespace. This algorithm comprises partial inverse quantum Schur transformations \cite{BCH-PRL-2006-schur} and amplitude rebalancing steps.
For deletion errors with certain PI codes, the recovery operation is even simpler and does not use the aforementioned two-stage quantum error correction procedure; we measure in the modulo Dicke basis, and apply geometric phase gates (GPGs).

We propose two different flavors of our decoding protocol, one using a mix of GPGs \cite{johnsson2020geometric} and the usual Clifford + $T$ gateset, and another using fully GPGs. In both schemes, our decoding scheme is efficient, in the sense that it has a worst-case complexity that is at most linear in the number of qubits for both schemes. 
%GPGs only require (1) a switchable dispersive or linear coupling of qubits uniformly to a catalytic bosonic mode, (2) displacement operations, and (3) balanced homodyne measurements on the mode.
%GPGs require only four native operations. The first operation is the initialization of the bosonic mode, which is achievable with the use of a laser.
%The second operation is a coherent on-off dispersive coupling of all qubits to that mode. This is routinely demonstrated in cavity QED architectures \cite{eickbusch2022fast,wang2024dispersive}. Third, we need displacement of the mode, and fourth, homodyne detection, both of which have mature implementations across an array of platforms \cite{PhysRevLett.94.113601,PhysRevLett.130.143004}. 
The benefit of moving beyond the usual Clifford + $T$ paradigm to fully using GPGs is the elimination of most of the need for individual qubit addressability, which exploits readily available bosonic manipulations. This offers a hardware-efficient route to realizing our protocol in near-term devices. 
  
\iftoggle{science}
{\section{Results}
\subsection{PI codes}}
{\section{PI codes}}

Symmetric states, invariant under any permutation of their underlying particles, are superpositions of the Dicke states  
$| D^N_w\>  ={\binom N w}^{-1/2}
\sum_{\substack{ x_1, \dots, x_N \in \{0,1\} \\ x_1+ \dots + x_N = w}}
|x_1, \dots, x_N\>,$ 
of weights $w = 0,\dots, N$.  
PI codes are QEC codes comprising of symmetric states, and are generally non-additive and therefore non-stabilizer. A notable family of PI codes are the $s$-shifted gnu codes \cite{ouyang2021permutation,OUYANG201743} on $N=gnu+s$ qubits which have logical codewords
\begin{align}
|j_{g,n,u,s}\> &= 2^{-(n-1)/2} \sum_{\substack{ {\rm mod}(k,2)=j \\ 0 \le k \le n}} {\binom n k}^{-1/2}  |D^{gnu+s}_{gk+s}\>,
\end{align}
for $j=0,1$, and are attractive candidates for QEC-enhanced sensing \cite{ouyang2022finite}. 
Intuitively, $g$ corresponds to the bit-flip distance, $n$ corresponds to the phase-flip distance. Here, $u\ge 1$ and $s\ge 0$ both specify the number of qubits used, and the distribution of the Dicke weights.
These $s$-shifted gnu codes have distance ${\rm min}\{g,n\}$, which allows the correction of ${\rm min}\{g,n\}-1$ deletion errors,
and the correction of
$\lfloor ({\rm min}\{g,n\}-1)/2 \rfloor $ general errors. $k=0,\dots,n$. 
We depict a gnu code initialised as a logical plus state in Figure \ref{fig:gnucode}.

\begin{figure}[h]
\centering  
\includegraphics[width=0.5\textwidth]{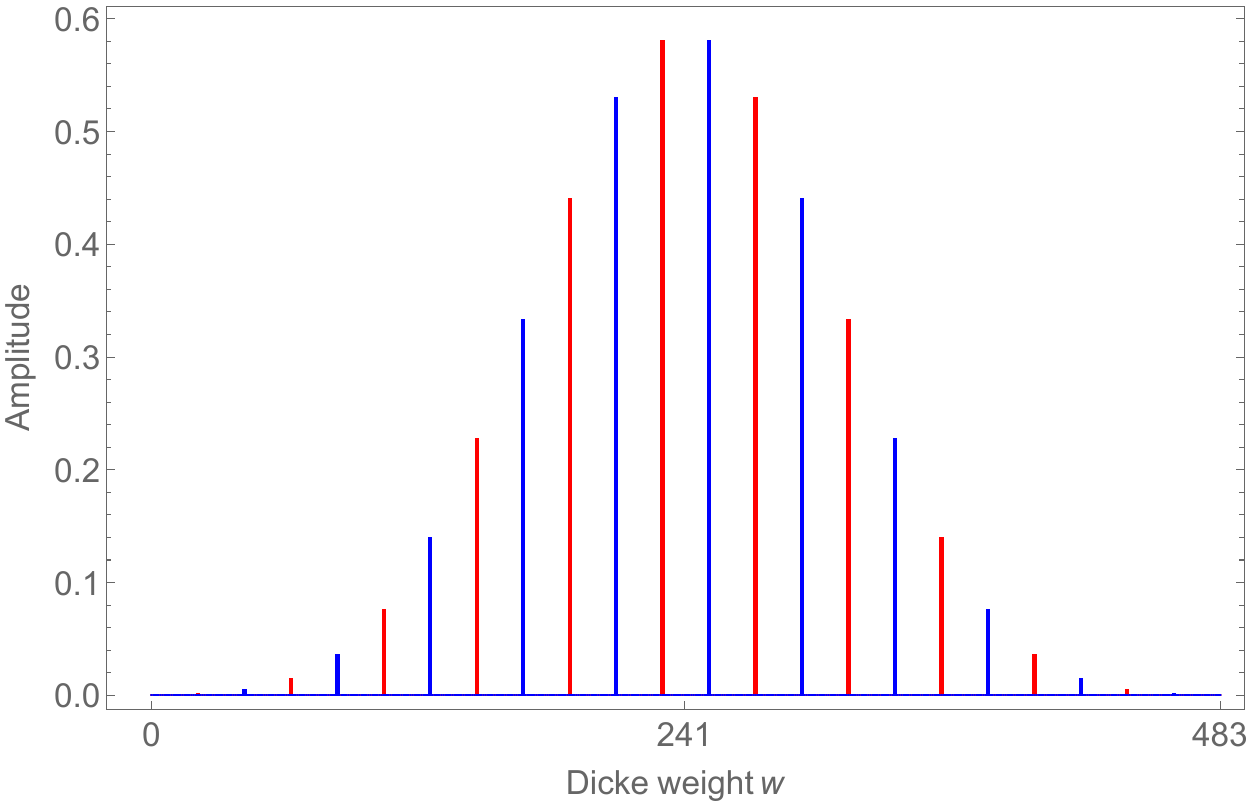} 
 \caption{{\bf A gnu state.} Illustration of a gnu state $|+_{g,n,u,s}\>
 =(|0_{g,n,u,s}\> + |1_{g,n,u,s}\>)/\sqrt 2
 $ where
 $g=21$, $n=2\lfloor g/2\rfloor+1$, $u=1+1/n$, and $s = g$ so that the number of qubits is $N=gn+2g=483$. Here, the horizontal axis depicts the weights $w$ of the Dicke states, and the vertical axis depicts the values of amplitude $\<D^{gnu+s}_w|+_{g,n,u,s}\>$. The colors red and blue correspond to the Dicke state amplitudes of $|0_{g,n,u,s}\>$ and $|1_{g,n,u,s}\>$ respectively, and which are related by a global bit flip. 
}\label{fig:gnucode}
 \end{figure}

%%%%%%%%%%%%%%%%%%%%%%%%%%%%%%%%%%%%%%
%%%%%%%%%%%%%%%%%%%%%%%%%%%%%%%%%%%%%% 
\iftoggle{science}
{\subsection{Correcting \texorpdfstring{$t$}{t} single qubit errors}  
}
{\section{Correcting \texorpdfstring{$t$}{t} single qubit errors}  
}
%%%%%%%%%%%%%%%%%%%%%%%%%%%%%%%%%%%%%%
%%%%%%%%%%%%%%%%%%%%%%%%%%%%%%%%%%%%%% 
 \label{sec:QEC}

We sketch how quantum error correction on arbitrary PI codes can proceed in 
two broad steps.
For general errors, where 
up to $t$ errors afflict the original 
PI code, but no qubits are lost or inserted, we perform the following steps.
First, we measure the total angular momentum on nested subsets of qubits. 
These measurements of total angular momentum occur in the sequentially coupled basis \cite{jordan2009permutational,havlivcek2018quantum}, where 
subsets of qubits that we measure on are 
$\smash{[k] = \{ 1 ,\dots, k \}}$, where $k=1,\dots,N$. The corresponding total angular momentum operators to be measured are
$\hat{J}^2_{[1]}, \dots, \hat{J}^2_{[N]}$ \footnote{Note that throughout we have used the simplified notation for the total angular momentum of all spins as $\hat{J}$ instead of $\hat{J}_{[N]}$.}. where 
\begin{align}
\hat{J}_{[k]}^2 &= {({\hat J}^x_{[k]})}^2 + {(\hat{J}^y_{[k]})}^2 + {(\hat{J}^z_{[k]})}^2,
\end{align}
and
\begin{align}
\hat{J}^x_{[k]} &= \frac{1}{2} \sum_{i =1}^k X_i,\quad
\hat{J}^y_{[k]} = \frac{1}{2}\sum_{i =1}^k Y_i,\quad
\hat{J}^z_{[k]} = \frac{1}{2}\sum_{i =1}^k Z_i.
\end{align}
The eigenvalues of the operators $\hat{J}_{[k]}^2$ are of the form $j_k(j_k+1)$ where $2j_k$ are positive integers. 
After measurement, $\hat{J}_k^2$ gives an eigenvalue of $j_k(j_k+1)$, and we can infer the total angular momentum number $j_k$. These total angular momentum numbers belong to the set
\begin{align}
T_k = \{  k/2 -j : j = 0,\dots, \lfloor k/2 \rfloor, k/2 - j \ge 0 \}.
\end{align}
Since the total angular momentum operators $\hat{J}_{[k]}^2$ all commute, the order of measuring these operators does not affect the measurement outcomes. Hence we may measure $\hat{J}_{[k]}^2$ sequentially; that is we measure $\hat{J}_{[2]}^2$, followed by $\hat{J}_{[3]}^2$, and so on (noting that $\hat{J}_{[1]}^2=\frac{3}{4}$ always). Using the observed total angular momentum $j_1, \dots, j_N$,
we construct standard Young tableaus (SYTs) encapsulating the measurement information.

Second, we bring the state back to the original PI code. There are two approaches for this. 
The first approach uses the inverse of a quantum Schur transform \cite{BCH-PRL-2006-schur,kirby2018practical,havlivcek2018quantum,krovi2019efficient,havlivcek2019classical,pearce2022multigraph} to bring the state back into the computational basis, 
after which geometric phase gates \cite{PhysRevA.110.062610} bring the state back into the original PI codespace.

The second approach uses a
teleportation-based approach based on geometric phase gates.
Here, we prepare an ancilla PI logical state in register A, 
and have the state after the total angular momentum measurements in register B. The state in register B is essentially in a codespace labelled by an SYT. 
Then we perform logical CNOT gate with control on register A and target on register B, before performing a logical Z measurement on register B. 
Depending on the outcome of this measurement, we apply a logical X correction on register A.  
The teleportation procedure is agnostic of which SYT the state in register B was in, and how many qubits are in register A and B. Such a logical CNOT can be performed using geometric phase gates \cite{ouyang2022finite}.

% \subsection{}
\label{ssec:correcting-t-errors}
Suppose that a noisy quantum channel $\mathcal N$ has Kraus operators that act non-trivially on at most $t$ qubits.
Then any distance $d$ PI code can correct errors introduced by $\mathcal N$, provided that $d \ge 2t+1.$
Here, we outline a two-stage quantum error correction procedure to correct these errors.
Stage 1 projects the corrupted state onto the irreducible representations of the symmetric group $S_N$ by measuring the total angular momentum of subsets of qubits.
Stage 2 does further projections within the irreducible representations and finally performs a unitary that brings the state back to the codespace. 

In Section \ref{sssec:YDYT}, after reviewing standard Young tableau, we describe the implementation of Stage 1 by measuring the total angular momentum of subsets of qubits as described in Section \ref{sssec:syndrome-extraction}.

%%%%%%%%%%%%%%%%%%%%%%%%%%%%%%%
\iftoggle{science}
{}
{\subsection{Young diagrams and Young tableau}
\label{sssec:YDYT}
%%%%%%%%%%%%%%%%%%%%%%%%%%%%%%%
Consider Young diagrams \cite[Page 29]{stanley-enumerative-v1} comprising $N$ boxes arranged in two left-justified rows.
We restrict our attention to Young diagrams with $r_1$ boxes on the first row and $r_2$ boxes on the second row where $r_1 \ge r_2$ and $r_1 + r_2 = N$. 
\begin{align}
\includegraphics[width=0.5\textwidth]{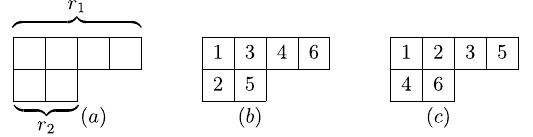} 
\label{eq:YT}
\end{align}
In \eqref{eq:YT}(a), we depict a Young diagram with four boxes on the first row, and two boxes on the second row. 
Each Young diagram corresponds to an integer partition of $N$ with two parts.
A standard Young Tableau (SYT) is obtained by filling up the $N$ boxes in a Young diagram with integers from 1 to $N$ such that the integers strictly increase from left to right, and from top to bottom.
Given a Young diagram, for instance in \eqref{eq:YT}(a), we give two examples of SYTs that can be obtained in \eqref{eq:YT}(b) and \eqref{eq:YT}(c) respectively.
We can enumerate the number of SYTs consistent with a given Young diagram using the hook-length formula \cite[Corollary 7.21.6]{stanley-enumerative-v2}.
The hook-length formula states that the number of SYTs consistent with any Young diagram with $N$ boxes is equal to $N!$ divided by the hook-length of each box. The hook-length of a given box is the total number of boxes in its hook, where the box's hook includes the box itself and all other boxes to its right and bottom. 
\begin{align}
\includegraphics[width=0.5\textwidth]{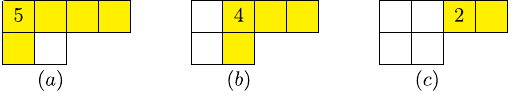} 
\label{eq:hooklengths}
\end{align}
In \eqref{eq:hooklengths}, we shade the hooks of the labelled cells. 
In \eqref{eq:hooklengths}(a), (b) and (c), the hook-lengths are five, four and two respectively.
From the hook-length formula, the number of SYTs corresponding to a Young diagram with two rows is
\begin{align}
&\frac{N!}{(N-r_1)! (r_1+1)!/ (r_1-N+r_1+1)}\notag\\
=& \binom N{r_1} \frac{2r_1 - N + 1}{r_1+1}.\label{eq:num_SYT}
\end{align}

We can also represent SYTs as a binary vector, using the so-called Young-Yamanouchi basis. 
The binary vector $(x_1,\dots, x_N)$
is such that $x_j=1$ if the symbol $j$ in the SYT is in the second row, and $x_j=0$ if the symbol $j$ in the SYT is in the first row. Operationally, $x_j=0$ if the measured total angular momentum increases, and $x_j=1$ if the measured total angular momentum decreases.

Apart from SYTs, we also consider semistandard Young tableau (SSYT), where boxes are filled with integers that strictly increase from top to bottom and are non-decreasing from left to right. 
Here, we restrict our attention to SSYTs filled with the numbers 1 and 2. 
The number of such SSYTs obtainable from a Young diagram with two rows is 
\begin{align}
r_1-r_2+1.\label{eq:num_SSYT}
\end{align} 
We list all SSYTs filled with the numbers 1 and 2 when $r_1 = 4$ and $r_2=2$ in \eqref{eq:SSYTs} below.
%%%%%%%%%%%%%
 \begin{align}
\includegraphics[width=0.5\textwidth]{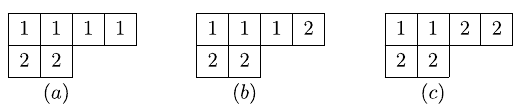} 
\label{eq:SSYTs}
 \end{align}
%%%%%%%%%%%%% 
From Schur-Weyl duality \cite[Chapter 9.1]{goodman2000representations} applied to quantum information theory \cite{BCH-PRL-2006-schur,harrow2013church}, we know that 
the $N$-qubit space $(\mathbb C^2)^{\otimes N}$ is isomorphic to 
\begin{align}
\bigoplus_{D} \mathcal Q^D \otimes \mathcal P^D, \label{eq:schurweyl}
\end{align}
where each $D$ denotes a Young diagram with $N$ boxes and two rows.
For every Young diagram $D$, $\mathcal Q^D$ is a space with basis elements labelled by all possible SYTs filled with integers 1 to $N$, and $\mathcal P^D$ is a space with basis elements labelled by all possible SSYTs filled with integers 1 and 2.
Since the basis of $\mathcal Q^D$ is labeled by SYTs of shape $D$, we can write $\mathcal Q^D = {\rm span}\{|\textsf{T}\> : \textsf{T} \mbox{ is a SYT for } D\}$.

Since $D$ always has two rows, we can represent it with the tuple $(r_1, r_2)$ where $r_1$ and $r_2$ count the number of boxes in the first and second rows respectively.
When $D=(N,0)$, we can see from \eqref{eq:num_SYT} and \eqref{eq:num_SSYT} that the dimension of $\mathcal Q^{(N,0)}$ is one, and the dimension of $\mathcal P^{(N,0)}$ is $N+1$. In fact, $\mathcal P^{(N,0)}$ corresponds precisely to the symmetric subspace of an $N$-qubit symmetric state.

\subsection{Symmetrizing lemma}

When a channel acts identically and independently on every qubit, it 
always maps a pure symmetric state to a density matrix that is block diagonal on the spaces $\mathcal Q^D \otimes \mathcal P^D$ as given \eqref{eq:schurweyl}.
For such block diagonal states, QEC can proceed by projecting the density matrix onto one of the blocks labelled by $D$.

However the density matrices that we encounter in the QEC of PI codes do not necessarily have this block diagonal structure.
In this scenario, such density matrices can always be made block diagonal in the Schur-Weyl basis by applying symmetrizing operations described by the quantum channel $\mathcal S$ that randomly permutes qubits, and has Kraus operators $\{\frac{1}{\sqrt N!} P_{\sigma} : \sigma \in S_N \}$,
where $P_\sigma$ denotes an $N$-qubit matrix representation of a permutation operator that permutes qubit labels according to the permutation $\sigma$.
In the following lemma, we prove that if we apply the symmetrizing channel, originally correctible errors remain correctible. 
%%%%%%%%%%%%%%%%%%
\begin{lemma}[Symmetrizing lemma]
\label{lem:sym}
Let $\mathcal C$ be any $N$-qubit PI code of distance $d$. Let $\mathcal N$ be any quantum channel with Kraus operators $K$ of weight at most $t$.
Then if $d \ge 2t+1$, the channels $\mathcal N$ and $\mathcal S \circ \mathcal N$ are both correctible with respect to $\mathcal C$.
\end{lemma}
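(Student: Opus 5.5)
We verify the Knill--Laflamme conditions. Let $\Pi$ denote the projector onto $\mathcal C$. A channel with Kraus operators $\{A_a\}$ is correctible with respect to $\mathcal C$ exactly when
\begin{align}
\Pi A_a^\dagger A_b \Pi = c_{ab} \Pi
\end{align}
for some scalars $c_{ab}$.

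\textbf{Correctibility of $\mathcal N$.} Take Kraus operators $K_a, K_b$ of $\mathcal N$. Each has weight at most $t$, so $K_a^\dagger K_b$ is supported on at most $2t \le d-1$ qubits. We expand $K_a^\dagger K_b$ in the Pauli basis. Every Pauli term in the expansion has weight at most $d-1$. The code has distance $d$, so each such Pauli $E$ satisfies $\Pi E \Pi = c_E \Pi$. By linearity, $\Pi K_a^\dagger K_b \Pi = c_{ab}\Pi$. Hence $\mathcal N$ is correctible.

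\textbf{Kraus operators of $\mathcal S\circ\mathcal N$.} The composed channel has Kraus operators
\begin{align}
\frac{1}{\sqrt{N!}} P_\sigma K_a, \qquad \sigma \in S_N.
\end{align}
For a pair of these, the Knill--Laflamme product is
\begin{align}
\frac{1}{N!}\, \Pi K_a^\dagger P_\sigma^\dagger P_\tau K_b \Pi .
\end{align}
Write $P_\sigma^\dagger P_\tau = P_{\sigma^{-1}\tau}$. Every state in a PI code is symmetric, so $P_\pi \Pi = \Pi$ for every permutation $\pi$. We commute $P_{\sigma^{-1}\tau}$ through $K_b$:
\begin{align}
P_{\sigma^{-1}\tau} K_b = \bigl(P_{\sigma^{-1}\tau} K_b P_{\sigma^{-1}\tau}^\dagger\bigr) P_{\sigma^{-1}\tau}.
\end{align}
The conjugated operator $K_b' = P_{\sigma^{-1}\tau} K_b P_{\sigma^{-1}\tau}^\dagger$ is $K_b$ with its support relabelled. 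It therefore still has weight at most $t$. The trailing permutation is absorbed by $\Pi$. The product thus reduces to
\begin{align}
\frac{1}{N!}\, \Pi K_a^\dagger K_b' \Pi .
\end{align}

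\textbf{Correctibility of $\mathcal S\circ\mathcal N$.} The operator $K_a^\dagger K_b'$ is supported on at most $2t \le d-1$ qubits. The same Pauli-expansion argument as before gives
\begin{align}
\Pi K_a^\dagger K_b' \Pi = c'_{a,b,\sigma,\tau}\, \Pi .
\end{align}
So the Knill--Laflamme conditions hold for $\mathcal S\circ\mathcal N$ as well.

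The only delicate point is that the distance-$d$ property must be used in its Knill--Laflamme form: every operator of weight at most $d-1$ acts as a scalar on $\mathcal C$. This is standard and covers both the $\mathcal N$ and $\mathcal S\circ\mathcal N$ cases. We do not need the $c$'s to form any special matrix beyond being scalars; Hermiticity of the matrix $(c_{ab})$ is automatic.
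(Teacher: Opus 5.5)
Your proof is correct and is essentially the paper's argument: use $P_\pi\Pi=\Pi P_\pi=\Pi$ to turn each Knill--Laflamme product into one involving relabelled operators of weight at most $t$, then expand in Paulis of weight at most $2t\le d-1$ and apply the Knill--Laflamme form of distance. The differences are cosmetic: you absorb the single permutation $P_\sigma^\dagger P_\tau$ into one factor (the paper conjugates $A$ by $P_\sigma$ and $B$ by $P_\tau$ separately), you expand the product $K_a^\dagger K_b'$ directly rather than each factor, and you spell out the correctibility of $\mathcal N$ itself, which the paper simply asserts.
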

%%%%%%%%%%%%%%%%%%
%%%%%%%%%%%%%%%%%%
\begin{proof}
The channel $\mathcal N$ is correctible, and hence satisfies the fundamental quantum error correction criterion \cite{KnL97}. To prove the lemma, we must show that the channel $\mathcal S \circ \mathcal N$ is also correctible.

%Now let $K$ denote a set of Kraus operators for the quantum channel $\mathcal N$.
Now, denote $\bar K = \{\frac{1}{\sqrt N!} P_{\sigma} A : \sigma \in S_N , A \in K \}$ as a set of Kraus operators for the quantum channel $\mathcal S \circ \mathcal N$. 
The Kraus operators in $\bar K$ are correctible if and only if for every $\sigma, \tau \in S_N$ and $A,B \in K$, there exists a $g_{A,B,\sigma,\tau} \in \mathbb C$ such that 
\begin{align}
\Pi  A^\dagger P_{\sigma}^\dagger  P_\tau B  \Pi
= g_{A,B,\sigma, \tau} \Pi ,
\label{qecc_of_SN}
\end{align}
where $\Pi$ the code projector for $\mathcal C$.

Since $\Pi$ is a projector onto the symmetric subspace, for all $\sigma \in S_n$,
we have that $P_\sigma \Pi = \Pi P_\sigma = \Pi.$
Denoting $A_\sigma = P_{\sigma} A P_{\sigma}^\dagger $
and $B_\sigma = P_{\tau} B P_{\tau}^\dagger$, note that \eqref{qecc_of_SN} is equivalent to 
\begin{align}
\Pi  A_\sigma^\dagger  B_\tau  \Pi = g_{A,B,\sigma, \tau}\Pi .
\label{qecc_of_SN2}
\end{align}
Since $A,B$ are operators with weight at most $t$, 
$A_\sigma$ and $B_\tau$ must also be operators of weight at most $t$. 
Hence, both $A_\sigma$ and $B_\tau$ are linear combinations of Pauli operators of weight at most $t$. 
Namely, 
\begin{align}
A_\sigma = \sum_{P : |P|\le t } a_{A,P,\sigma} P,
\quad
B_\tau     = \sum_{P : |Q|\le t } b_{B,Q,\sigma} P,
\end{align}
where $a_P$ and $a_Q$ are real coefficients. 
From this, it follows that the left side of \eqref{qecc_of_SN2} is equivalent to 
\begin{align}
\Pi  A_\sigma^\dagger  B_\tau  \Pi  = \sum_{P,Q} a_{A,P,\sigma}^* b_{B,Q,\sigma}  \Pi  P Q \Pi.
\label{qecc_of_SN3}
\end{align}
From the Knill-Laflamme condition \cite{KnL97},
since $\mathcal C$ is a code of distance $d$, for every Pauli $P$ and $Q$ of weight at most $t$, there exists a $c_{P,Q} \in \mathbb C$ such that 
\begin{align}
\Pi  P Q \Pi = c_{P,Q} \Pi.
\label{distance-KL}
\end{align}
Substituting \eqref{distance-KL} into the left side of \eqref{qecc_of_SN3},
we can conclude that 
\begin{align}
\Pi  A_\sigma^\dagger  B_\tau  \Pi 
= \sum_{P,Q} a_{A,P,\sigma}^* b_{B,Q,\sigma}  c_{P,Q}\Pi
\label{qecc_of_SN4}.
\end{align}
This implies that 
$\Pi  A^\dagger P_{\sigma}^\dagger  P_\tau B  \Pi
= g_{A,B,\sigma, \tau} \Pi ,$
where $
% \begin{align}
g_{A,B,\sigma, \tau} = \sum_{P,Q} a_{A,P,\sigma}^* b_{B,Q,\sigma}  c_{P,Q},
% \end{align}
$
and this proves that $\mathcal S \circ \mathcal N$ is also correctible with respect to the code $\mathcal C$.
\end{proof}
Lemma \ref{lem:sym} hints at how we can perform quantum error correction on any PI quantum code.
Namely, if any correctible channel $\mathcal N$ introduces errors on a PI quantum code, 
we can project the state into the Schur-Weyl basis and still be able to correct the resultant errors. 
This is because a symmetrizing channel $\mathcal S$ makes a quantum state block-diagonal in the Schur-Weyl basis, and Lemma \ref{lem:sym} tells us that if $\mathcal N$ is correctible, $\mathcal S \circ \mathcal N$ is also correctible. 
Note that we do not have to actually physically symmetrize the channel, if we randomize the labels in our minds.
Section \ref{sssec:syndrome-extraction} illustrates how we may project the state onto the diagonal blocks in the Schur-Weyl basis.
%%%%%%%%%%%%%%%%%%
}

%%%%%%%%%%%%%%%%%%%%%%%%%% 
\iftoggle{science}
{\subsubsection{Syndrome extraction by measuring total angular momentum}
}
{\subsection{Syndrome extraction by measuring total angular momentum}
}
\label{sssec:syndrome-extraction}
%%%%%%%%%%%%%%%%%%%%%%%%%% 
We expand on the idea of 
syndrome extraction by measuring total angular momentum \cite{ouyang2022finite}.
For this, we take as input a quantum state $\rho$, 
uses the total angular momenta $j_1 , \dots, j_N$ of qubits in $[1] ,[2], \dots, [N]$ to iteratively construct a 
\iftoggle{science}
{standard Young Tableau (SYT) $\textsf{T}$ (which we explain in the Materials and Methods section) by adding one labelled box at a time.}
{SYT $\textsf{T}$ by adding one labelled box at a time.}
This outputs the resultant state $\rho_\textsf{T}$, the SYT $\textsf{T}$ and $j_\textsf{T} = j_N$.

% \newline
% \begin{ytableau}
%   \none[]  & 1  & 3  & 4  & 6  \\
%   \none[]  &  2 &  5 & \none &  \none \\
%   \none & \none[] & \none[(b)] & \none[]\\
% \end{ytableau}
%
%\noindent {\bf Algorithm} \texttt{SyndromeSYT}
%\begin{enumerate}
%\item Set $i = 2$,  and set $Y$ to be a SYT with one box on the first row filled with the number 1, and no boxes on the second row.
%\item Set the current total angular momentum $j = 1/2.$  
%\item For $ k = 2:N$
%\item \quad If $j_k  = j+1/2$, update $Y$ by creating a box labelled by $k$ to the right side of the first row. Set $ j \gets j + 1/2$.
%\item \quad If $j_k  = j-1/2$, update $Y$ by creating a box labelled by $k$ to the right side of the second row. Set $ j \gets j - 1/2$.
%\item Endfor.
%\end{enumerate} 

Here, $j_\textsf{T}$ denotes the total angular momentum of the $N$-qubits associated with the SYT $\textsf{T}$. Combinatorially, for a SYT with $r_1$ and $r_2$ boxes on the first and second rows respectively, $j_\textsf{T} = (r_1-r_2)/2$.
Figure \ref{fig:syt-c} depicts how we obtain an SYT syndrome by measuring total angular momentum on consecutive nested subsets of qubits.

\begin{figure}[h]
\centering  
\includegraphics[width=0.5\textwidth]{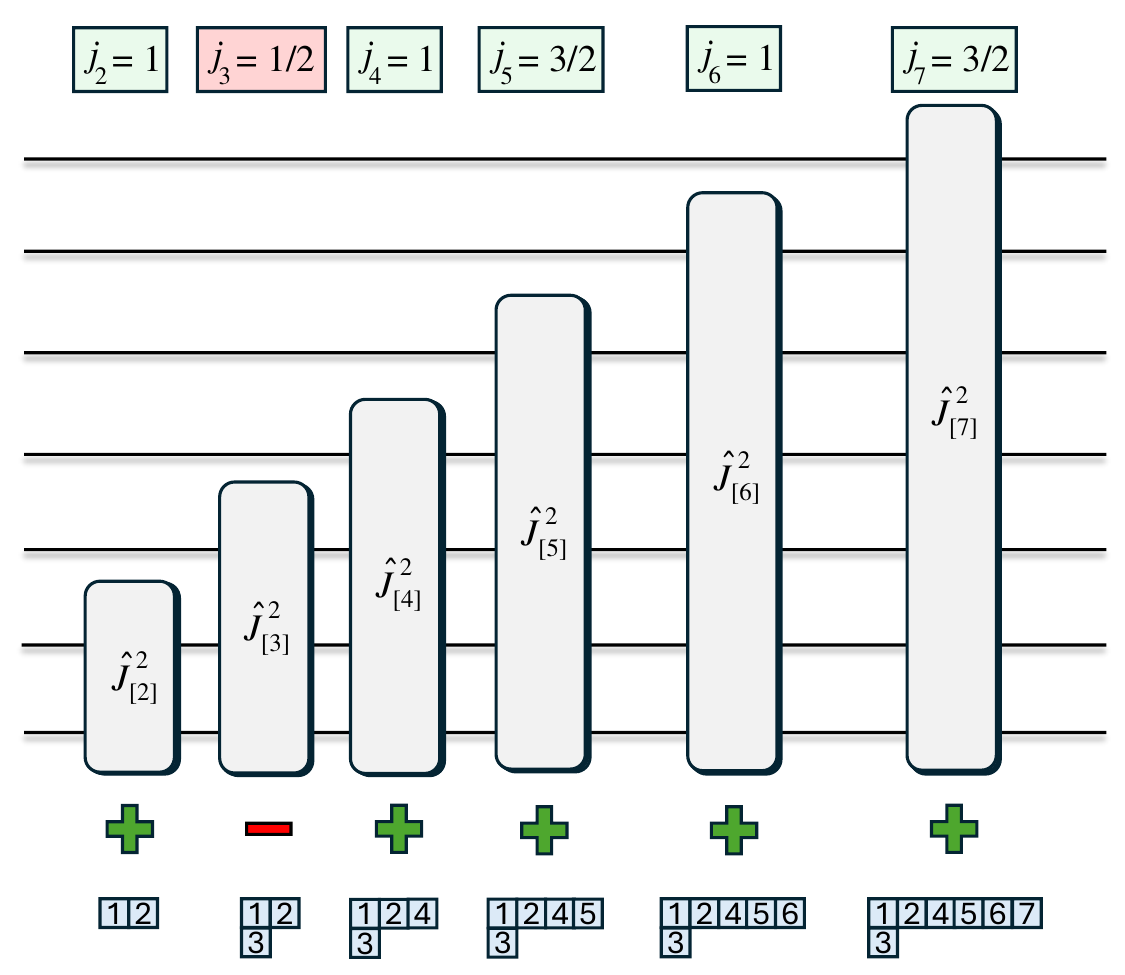} 
 \caption{{\bf Measurement of the total angular momentum of nested consecutive sets of qubits.} This measurement collapses the state onto an SYT. 
 When total spin $j_k$ increases, we add a numbered box to the first row. Otherwise, $j_k$ decreases, and we add a numbered box to the second row. The `plus' and `minus' signs depicted when mapped to `0' and `1' respectively, give a binary string that is the Young-Yamanouchi representation of the SYT.  
}\label{fig:syt-c}
 \end{figure}

After obtaining a SYT $\textsf{T}$, we know that $\rho_T$ must have support within a subspace of $\mathcal P^\textsf{T}  = |\textsf{T}\> \otimes \mathcal P^D. $
We can enumerate the number of SSYTs to determine $|\mathcal P^\textsf{T}|$; from \eqref{eq:num_SSYT}, we find that
\begin{align}
|\mathcal P^\textsf{T}|  = (N/2+j_\textsf{T})-(N/2-j_\textsf{T})+1 = 2j_\textsf{T}+1.
\end{align}
This is sensible since $\mathcal P^\textsf{T}$ corresponds to a space with total angular momentum $j_\textsf{T}$ on all $N$ qubits. 
Using information about the SYT $\textsf{T}$, we can perform quantum error correction on the spaces $\mathcal P^\textsf{T}$.
Denoting $\Pi^{\textsf{T}}$ as the projector onto $\mathcal P^\textsf{T}$,
    we point out that $\Pi^{\textsf{T}}$ commutes with the operator ${\hat J^z}$, because ${\hat J^z}$ commutes with the $\hat{J}^2_{[k]}$. Hence, the eigenvectors of ${\hat J^z}$ are also eigenvectors of $\Pi^{\textsf T}$.
In fact, we can write the projector $\Pi^{\textsf T}$ as
\begin{align}
\Pi^\textsf{T} =  \sum_{m=-j_\textsf{T}}^{j_\textsf{T}} |m_\textsf{T}\>\<m_\textsf{T}|,
\end{align}
where $\{|m_\textsf{T}\> :   m = -j_\textsf{T} ,  \dots, j_\textsf{T} \}$ denotes an orthonormal basis of $\mathcal P^\textsf{T}$ and 
\begin{align}
\hat{J}^z |m_\textsf{T}\> = m|m_\textsf{T}\>.
\end{align}
Here, the states $|m_\textsf{T}\>$ are magnetic eigenstates of the operator $\hat{J}^z$.
Note that when ${\textsf{T}}$ is the SYT with only one row, $|m_\textsf{T}\> = |D^N_{m+j_\textsf{T}}\>$ and $|m_\textsf{T}\>$ is a Dicke state.

For any positive integer $g$, we can decompose the projector $\Pi^{\textsf{T}}$ as 
\begin{align}
    \Pi^{\textsf{T}}  = 
    \sum_ {a = 0, \dots, g-1 }
    \Pi^{\textsf{T}}
    _{{\rm mod},g,a}
\end{align}
where 
\begin{align}
\Pi^{\textsf{T}}_{{\rm mod},g,a} = \!\!\!\!\!\!\!\!\!\!\!\!\!
    \sum_{\substack{
    k \in \mathbb Z \\ 
-j_\textsf{T} \le  gk + a -j_\textsf{T}   \le j_\textsf{T} \\
    }} 
    \!\!\!\!\!\!\!\!\!\!
    |(gk+a-j_\textsf{T} )_\textsf{T}\rangle \langle(gk+a-j_\textsf{T} )_\textsf{T}| .
\end{align}
Here, we can see that the projector 
$\Pi^{\textsf{T}}_{{\rm mod},g,a}$ projects onto the span of states $|m_\textsf{T}\>$ where ${\rm mod}(m+j_\textsf{T},g)=a$, that is, the space where magnetic quantum numbers plus $j_\textsf{T}$ are equivalent to $a$ modulo $g$ for a particular SYT \textsf{T}.

By defining the projector 
\begin{align}
    Q_{g,a} =  \sum_{{\rm SYTs} \ \textsf{T}}    
   \Pi^{\textsf{T}}_{{\rm mod},g,a},
\end{align}
note that for any positive integer $g$, we have 
\begin{align}
    \sum_{a=0}^{g-1} Q_{g,a} = I_N. 
\end{align}
The projector 
$ Q_{g,a}$ projects onto the span of states $|m_\textsf{T}\>$ where ${\rm mod}(m+j_\textsf{T},g)=a$ for any SYT \textsf{T}.

Given a set of projectors 
$P = \{P_1,\dots, P_k\}$ such that $\sum_j P_j = I_N$ and an $N$-qubit density matrix to measure, 
let us denote the output state $\rho'$ of a projective measurement on a density matrix $\rho$ with respect to $P$ using the notation 
\begin{align}
\rho' = \texttt{ProjMeas}(\rho,P).
\end{align}
Next we introduce 
\texttt{ModuloMeas}, which for given fixed $g$, performs a projective measurement on quantum state $\rho$ according to the set of projectors $\{ Q_{g,0}, \dots, Q_{g,g-1} \}$.

\iftoggle{science}
{\subsubsection{Amplitude rebalancing}}
{\subsection{Amplitude rebalancing}}

Amplitude rebalancing is an adaptive multistep algorithm wherein each step depends on the output of all previous steps. 
As input, the algorithm takes a state promised to be in a codespace $\mathcal C$ of a known distance $d$ quantum code that encodes a single logical qubit.
We can write any such  input state as 
\begin{align}
     |\psi\> = 
     \cos \theta |0_L\> + e^{i \phi }\sin \theta |1_L\>\label{w-input},
 \end{align}
 where $\theta, \phi \in \mathbb R$ and 
 $\{|0_L\>,|1_L\>\}$ are orthonormal vectors that span the codespace $\mathcal C$. 
 The objective of the amplitude rebalancing algorithm is to map $|\psi\>$ to a state
\begin{align}
     |\psi_w\> = 
     \frac{ \sqrt{3+w}  \cos \theta |0_L\> 
     + e^{i \phi }\sqrt{3-w} \sin \theta |1_L\>}
     {\sqrt{ 3 + w \cos 2\theta } }, \label{w-form}
 \end{align}
 for a given real $w \in [1/2,-1/2]$.
 Intuitively, the effect of the amplitude rebalancing algorithm deforms the scalings of the logical codewords by a real amount.

The $k$th step of the algorithm performs a projective measurement that projects the state onto one of two orthogonal spaces 
$\mathcal S_{k,1}, S_{k,2}$.
Information about the measurement outcomes of the preceding $k-1$ steps is stored as a binary vector ${\bf x^{(k)}}\in \{0,1\}^{k-1}$.
The exact form for the spaces
$\mathcal S_{k,1}, S_{k,2}$
depends on the value of ${\bf x^{(k)}}$.
Of these spaces $\mathcal S_{k,1},  \mathcal S_{k,2}$, the space
$\mathcal S_{k,1}$ is the preferred space that we project onto, 
and the projection probability on space $\mathcal S_{k,1}$ occurs with probability strictly greater than 1/2.
The multiple time steps allow us to, with high probability, shift the relative amplitudes of the logical zero and logical one operators to our desired value.

Suppose that the state \eqref{w-input} is the input state for step $k$. Then projection space $\mathcal S_{k,1}$, and a subsequent mapping back to the codespace, gives the resultant state of the form \eqref{w-form}. The parameter $w$ in \eqref{w-form} is chosen according to the binary vector ${\bf x}^{(k)}$.

Next, we proceed to describe the space $\mathcal S_{k,1}$ and $\mathcal S_{k,2}$. For convenience, we will 
denote the spaces $\mathcal S_{k,1}$ and $\mathcal S_{k,2}$ as 
$\mathcal S_{1}$ and $\mathcal S_{2}$.
Suppose that $E$ is an operator such that 
\begin{align}
    \<j_L|E|k_L\> = 0, \label{KL1}
\end{align}
for all distinct $j,k$ and where 
\begin{align}
\<j_L|E|j_L\>  = 
\<k_L|E|k_L\>\label{KL2}
\end{align}
for all $j,k$.
There exists such a non-trivial operator for any PI code of distance at least two.
An example of an operator $E$ for PI codes is the angular momentum operator $\hat J_z$.

Using the Gram-Schmidt orthonormalization procedure on 
the vectors $|0_L\>$ and $E|0_L\>$,
we can derive orthonormal vectors 
$|0_L\> $ and $|0'_L\>$, and similarly orthonormal vectors
$|1_L\> $ and $|1'_L\>$.

Now we define the states
\begin{align}
|0_w\>  &\coloneqq 
(\sqrt{3+w} |0_L\> + \sqrt{1-w}|0'_L\>)/2,\\
|1_w\>  &\coloneqq 
(\sqrt{3-w} |1_L\> + \sqrt{1+w}|1'_L\>)/2.
\end{align}
We also define the states
\begin{align}
|\bar 0_w\>  &\coloneqq 
(\sqrt{1-w} |0_L\> - \sqrt{3+w}|0'_L\>)/2,\\
|\bar 1_w\>  &\coloneqq 
(\sqrt{1+w} |1_L\> - \sqrt{3-w}|1'_L\>)/2.
\end{align}
Clearly, the states 
$|0_w\>, |1_w\>, |\bar 0_w\>, |\bar 1_w\>$ are pairwise orthonormal.

We define the projector associated with $\mathcal S_1$
 as 
\begin{align}
\Pi_{w} = |0_w\>\<0_w| +   |1_w\>\<1_w|,
\end{align}
and the projector associated with
$\mathcal S_2$ as
\begin{align}
\bar \Pi_{w} = |\bar 0_w\>\<\bar 0_w| +   |\bar 1_w\>\<\bar 1_w|.
\end{align}
We can calculate the projection of $|\psi\>$ onto $\mathcal S_1$ as
\begin{align}
\Pi_w |\psi \> =
\frac{\sqrt{3+w }}{\sqrt 6} \cos \theta |0_w\>
+
\frac{\sqrt{3-w}}{\sqrt 6} e^{i\phi }\sin \theta |1_w\>
\end{align}
and the projection of 
$|\psi\>$ onto $\mathcal S_2$ as
\begin{align}
\Pi_w |\psi \> =
\frac{\sqrt{1-w }}{2} \cos \theta |0_w\>
+
\frac{\sqrt{1+w}}{2} e^{i\phi }\sin \theta |1_w\>.
\end{align}
 The probabilities of projecting onto the spaces 
 $\mathcal S_1$
 and 
 $\mathcal S_2$
 are given respectively as 
\begin{align}
\frac{3}{4} + \frac{w}{4} \cos 2\theta  ,
\quad
\frac{1}{4} - \frac{w}{4} \cos 2\theta  .
\end{align}
Since $w\in [1/2,1/2]$,
the probability of projecting onto $\mathcal S_1$ is at least 5/8.
  
Therefore, in each step, we project the state onto one of two two-dimensional spaces $\mathcal S_1$ and $\mathcal S_2$.    We interpret the projection of the state onto $S_1$ as a successful amplitude rebalancing step. In this case, we append a `0' to the binary vector ${\bf x}^{(k)}$.
If we fail to project the state onto $S_1$, the state is in a space $S_2$, and that is an unsuccessful amplitude rebalancing step, and we append a `0' to the binary vector ${\bf x}^{(k)}$.
After projections onto either $S_1$ or $S_2$,
we apply a conditional unitary that maps the state back to the codespace $\mathcal C$.
As we can see, the binary vector 
${\bf x}^{(k)}$ keeps track of the relative shifts in the amplitudes of the logical zero and logical one states within the codespace.
Within each time step, we select the value of the shift-parameter $w$ according to how we would like to shift the relative amplitudes of the logical zero and logical one in the case of when we are lucky and project onto $\mathcal S_1$.

We can extend this algorithm to PI code that encodes more than one logical qubit.

To see this, note that 
for all $j \ge 2$, we can calculate
\begin{align}
    \Pi_w |j_L\> = 0,
    \quad 
    \bar \Pi_w |j_L\> = 0,
\end{align}
because this arises trivially from the Knill-Laflamme error correction condition \eqref{KL1}, \eqref{KL2} with respect to the operator $E$ as long as the PI code has a distance at least 3.
Therefore, we can adapt the above algorithm to pairs of logical codewords, until all the logical codewords are balanced.

Furthermore, we will converge exponentially fast to a successful amplitude rebalancing projection in the number of repetitions of the projective measurement.

%%%%%%%%% 
\iftoggle{science}
{\subsubsection{Further projections and recovery}}
{\subsection{Further projections and recovery}}
%%%%%%%%%
When a channel's Kraus operators $K_1, \dots, K_a$ are correctible, 
from Lemma \ref{lem:sym}, the operators $\Pi^\textsf{T}K_1, \dots, \Pi^\textsf{T}K_a$ must also be correctible for every SYT $\textsf{T}$.
From the Knill-Laflamme quantum error criterion \cite{KnL97}, we know that $\mathcal P^\textsf{T}$ partitions into orthogonal correctible spaces and a single uncorrectible space.
To describe the correctible subspace within $\mathcal P^\textsf{T}$, 
we use the vectors $\Pi^\textsf{T}K_i|j_L\>$, 
where $|j_L\>$ are logical codewords of the PI code of dimension $M$ and $j=0,\dots, M-1$.

Next, for every $j=0,\dots, M-1$, the Gram-Schmidt process \cite[Section 0.6.4]{horn2012matrix} takes as input the sequence of vectors
\begin{align}
(\Pi^\textsf{T}K_1 | j_L\>,
\dots,
\Pi^\textsf{T}K_a | j_L\>),
\end{align}
and outputs the sequence of orthonormal vectors 
\begin{align}
(|{\bf v}^\textsf{T}_{1,j}\>,
\dots,
| {\bf v}^\textsf{T}_{r_\textsf{T},j}\>)
\label{seq:v-vectors}
\end{align}
that span the same space $\mathcal A^\textsf{T}_j$ of dimension $r_\textsf{T}$. Note that $r_\textsf{T}$ counts the number of correctible spaces.
From the Knill-Laflamme quantum error correction criterion, the spaces $\mathcal A^\textsf{T}_{j}$ and $\mathcal A^\textsf{T}_{k}$ that correspond to 
the $j$th and $k$th logical codewords respectively are pairwise orthogonal for distinct $j$ and $k$ and both have dimension $r_\textsf{T} $\cite{KnL97}.
Hence, by the pigeon-hole principle,  
\begin{align}
r_\textsf{T} \le |\mathcal P^\textsf{T} |/M.
\end{align}
Since $|\mathcal P^\textsf{T}| = 2j_\textsf{T}+1$, the number of these correctible subspaces is at most $(2j_\textsf{T} + 1)/M$, where $M$ is the number of logical states the PI code has.

\begin{figure}[t]
\centering  
\includegraphics[width=0.5\textwidth]{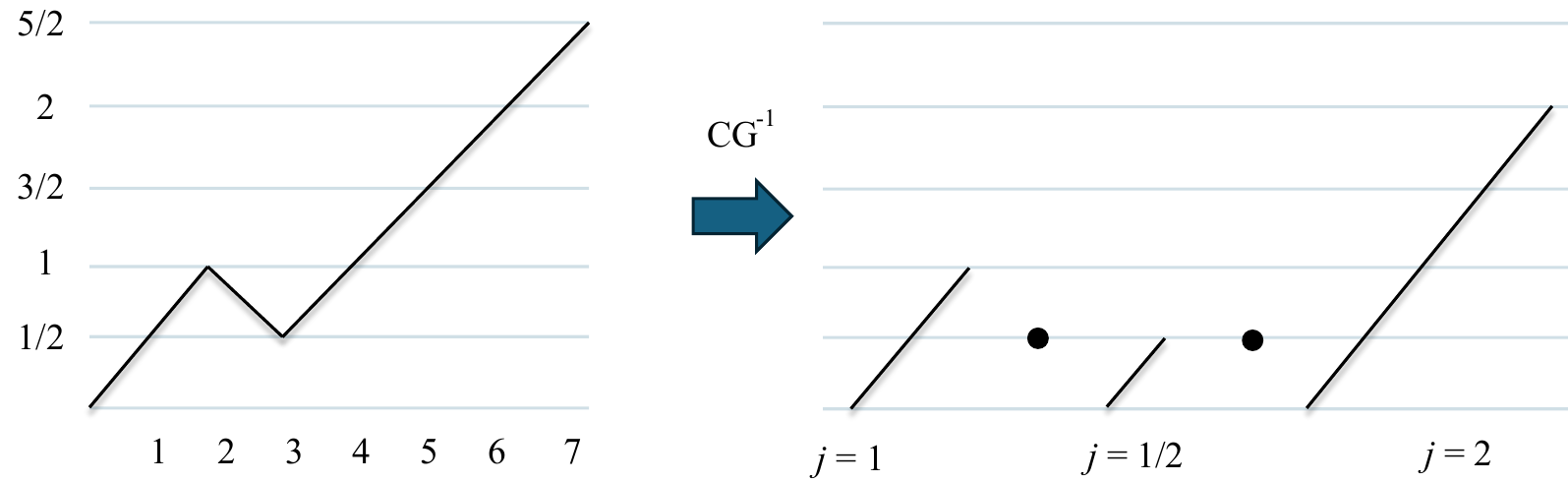} 
 \caption{{\bf Decoupling.} (Left) A Bratelli diagram illustrating how a given SYT $\textsf{T}$, here the one appearing in Fig.~\ref{fig:cgs}, can be visualized as a sequential process of addition of elementary spin$-1/2$ qubits. The kink in the middle represents an event where coupling spin number $3$ to the joint $j=1$ space of the first two reduced the total spin angular momentum to $j=1/2$. (Right) An inverse Clebsch-Gordan transformation decouples the spin 5/2 state into three spin states, one with spin 1, one with spin 1/2 and another with spin 2.
 }\label{fig:syt-bratelli}
 \end{figure}

From the sequence of vectors in \eqref{seq:v-vectors}, for $k=1,\dots, r_\textsf{T}$, we also 
define the spaces $\mathcal C^\textsf{T}_k$ spanned by the vectors 
$|{\bf v}^\textsf{T}_{k,0}\>, \dots, | {\bf v}^\textsf{T}_{k,M-1}\>$, and with corresponding projectors
\begin{align}
    \Pi^{\textsf T}_{k} = \sum_{j=0}^{M-1}
    |{\bf v}^\textsf{T}_{k,j}\>
    \<{\bf v}^\textsf{T}_{k,j}|.
\end{align}
Here, we interpret $k$ as a label on the PI code's correctible subspaces within $\mathcal P^\textsf{T}$.
From the Knill-Laflamme quantum error correction criterion, these spaces $\mathcal C^\textsf{T}_k$ and $\mathcal C^\textsf{T}_{k'}$ 
are pairwise orthogonal for distinct $k$ and $k'$. 

To perform QEC after obtaining the syndrome $\textsf{T}$, it suffices to execute the following steps that arise from the Knill-Laflamme QEC procedure \cite{KnL97}. Abstractly, the steps are as follows. First, we perform a projective measurement according to the projectors $\Pi^{\textsf T}_{k}$ for $k=1,\dots, r_\textsf{T}$. If the obtained state is not in $\mathcal C^\textsf{T}_k$ for any $k$, then we have an uncorrectible error. Otherwise, the obtained state is in $\mathcal C^\textsf{T}_{k}$ for some $k = 1,\dots, r_\textsf{T}$. Finally, if the error is correctable, we perform a unitary operator that maps 
$    |{\bf v}^\textsf{T}_{k,j}\>$ to $|j_L\>$ for every $j = 0, \dots, M-1$. This completes the QEC procedure.
   
\begin{figure*}[t]
\centering  
\includegraphics[width=\textwidth]{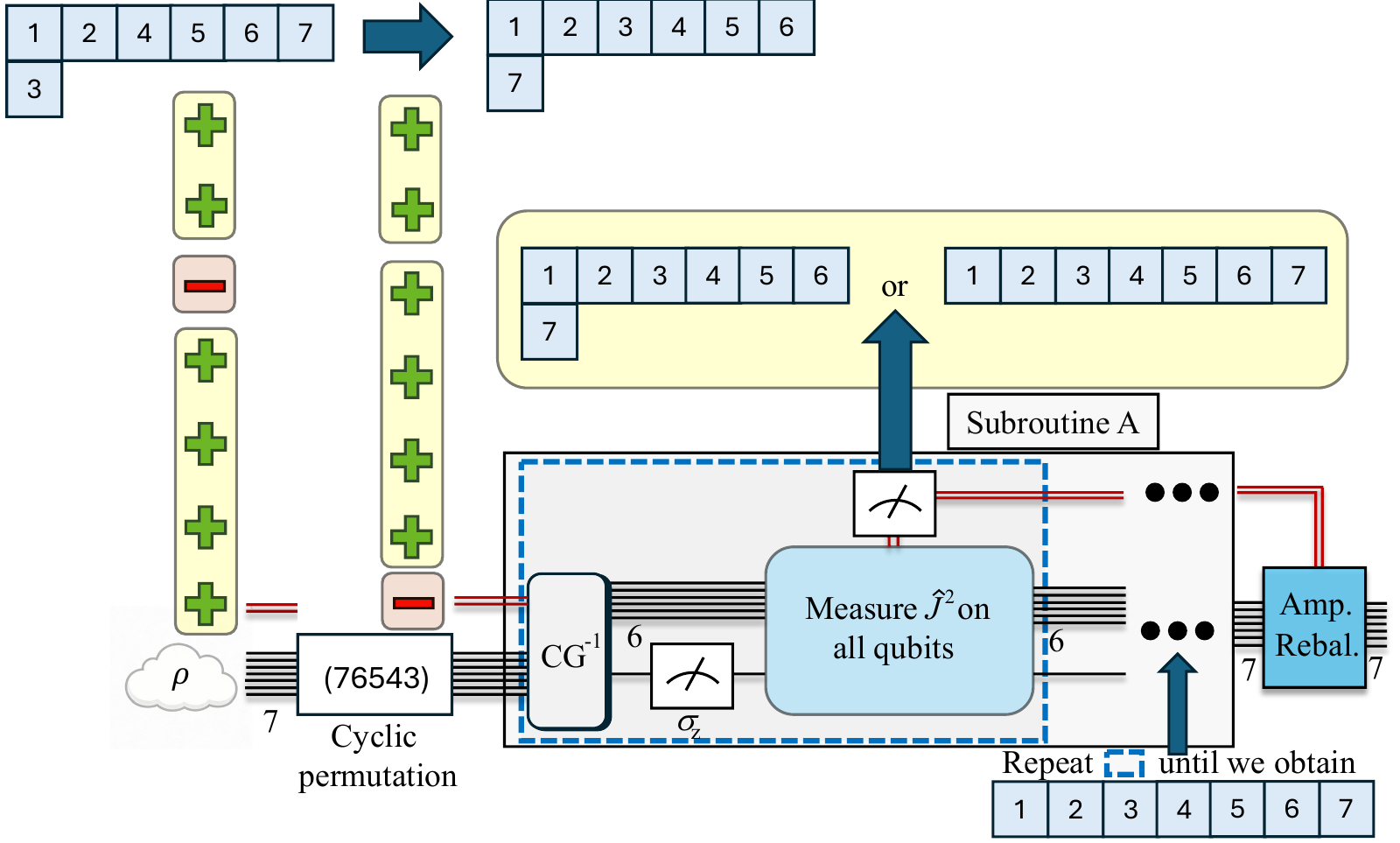} 
 \caption{{\bf QEC after obtaining the SYT syndrome.}
 An error occurs on a seven-qubit PI code \cite{PoR04,kubischta2023not}, and a SYT syndrome measurement obtains the SYT depicted on the upper left. This algorithm proceeds using this SYT syndrome information. 
 The classical input of the left side of the circuit depicts the Young-Yamanouchi representation ${\bf x}$ of this SYT. We permute the qubits so that the qubit that correspond to the component in ${\bf x}$ equal to 1 is rearranged as the last qubit.
 We apply Subroutine A, wherein we apply an inverse Clebsch-Gordan transformations followed by a computational basis measurement of the decoupled qubit and subsequently, a measurement of total angular momentum. We repeat Subroutine A until we obtain a projection onto maximal angular momentum space. 
 Each pair of red lines depicts one bit of classical information, and each black depicts one qubit of quantum information.
 After successful projection onto the symmetric space, we perform the amplitude rebalancing algorithm to rebalance the deformation of the amplitudes of the logical codewords that happened during the course of performing Subroutine A.}\label{fig:cgs}
 \end{figure*}

In contrast to this abstract procedure, we can also describe a more explicit implementation of an equivalent QEC procedure after obtaining the syndrome $\textsf{T}$.
For a PI code with logical codewords
\begin{align}
|j_L\> = \sum_{k} a_k |D^N_k\> 
\end{align}
for $j=0,\dots, M-1$,
we define corresponding $\textsf{T}$-codes to have (subnormalized) logical codewords
\begin{align}
|{\rm code}_{j,\textsf{T}}\>
    &=
\sum_{k} a_k |(k/2-j_\textsf{T})_\textsf{T}\>.
\end{align}
When the support of the PI code in the Dicke basis is appropriately restricted and when the SYT syndrome $\textsf{T}$ is correctible, 
the $\textsf{T}$-code's logical codewords can be normalized.
For simplicity, we continue the analysis assuming that $|{\rm code}_{j,\textsf{T}}\>$ are normalized vectors for correctible SYT syndromes $\textsf{T}$.

We can obtain a \textsf{T}-code using the following operations.\newline

\noindent{\bf Knill-Laflamme Recovery}
\begin{enumerate}
\item Apply a unitary 
$W_{\textsf{T}}$, where for every $k = 1,\dots, r_\textsf{T}$ and $j=0,\dots, M-1$, we have
\begin{align}
W_{\textsf{T}} :
|{\bf v}^\textsf{T}_{k,j}\>
\to
|{ j r_\textsf{T} + (k-1)   -j_\textsf{T} })
_{ \textsf{T}}\>.
\end{align} 

\item Next, we use  \texttt{ModuloMeas} to measure in the modulo magnetic quantum number basis. In particular, when the output state of the previous step is $\rho_1$, and the output state of this step is $\rho_2$, we set $(\rho_2,a) = \texttt{ModuloMeas}(\rho_1,r_\textsf{T})$. 
Here, $a = 0,\dots, r_\textsf{T}-1$ is what we call the syndrome outcome of the modulo measurement. 

\item Next, set $k=a+1$. Using this value of $k$, we apply a unitary $V_{\textsf{T},k}$ on $\rho'$, where 
\begin{align}
V_{\textsf{T},k} :
|{ j r_T + (k-1) -j_\textsf{T} })
_{ \textsf{T}}\>
\to
|{\rm code}_{j, \textsf{T}}\>,
\end{align}
for all $j=0,\dots,M-1$. 
The state is now $\rho_3 = V_{\textsf{T},k} 
\rho_2 V_{\textsf{T},k}^\dagger $, which is in a $\textsf{T}$-code.
\end{enumerate}
The final step is to map the $\textsf{T}$-code back to the original PI code, using ideas from the quantum Schur transform \cite{BCH-PRL-2006-schur,kirby2018practical,havlivcek2018quantum,krovi2019efficient,havlivcek2019classical,pearce2022multigraph} which is a unitary transformation which maps computational basis states to states of a superposition of states $|m_\textsf{T}\>$ over magnetic number $m$ and SYT $\textsf{T}$.
\newline

\noindent {\bf Algorithm shown in Fig.~\ref{fig:cgs}.}
\begin{enumerate}
\item 
The first step is to use a repeat-until-success procedure to project the state onto the symmetric subspace.
We achieve this with the following steps if the state is not in the symmetric subspace:
\begin{enumerate}
    \item Based on the SYT syndrome, we permute the qubits so that the qubits that correspond to the Young-Yamanouchi vector components with 1-bit values are rearranged as the last few qubits. We label these qubits as $\ell,\dots , N.$
    
    \item We perform a cascading sequence of $N-\ell+1$ inverse Clebsch-Gordan transformation to decouple the last $N-\ell +1 $ qubits. This reproduces part of the inverse quantum Schur transform in Ref.~\cite{BCH-PRL-2006-schur}. 
Here, each Clebsch-Gordan transformation has two CNOTs and a controlled-rotation.
    
    \item We measure the last $N-\ell +1 $ qubits in the computational basis.
    \item We measure the total angular momentum of the qubits in the sets $\{1,\dots, \ell\} , \dots, \{1,\dots, N\}$, and derive a new SYT. If we find that the state is not in the symmetric subspace, we repeat steps 1(a) to 1(d). If the state is in the symmetric subspace, we proceed to Step 2.
\end{enumerate}

\item 
We use all of the prior measurement outcomes in Step 1 to ascertain how the relative amplitudes of the PI code has shifted amongst the logical codewords. Then we apply an amplitude rebalancing algorithm to rebalance these amplitudes of the logical codewords. 
The amplitude rebalancing algorithm terminates with high probability in a constant number of projective measurement steps.

\end{enumerate}
When the logical codewords of the $\textsf{T}$-codes are normalized, this procedure recovers the original PI codes.

%%%%%%%%%%%%%%%%%%%%%
\iftoggle{science}
{\subsubsection{Teleportation based QEC}}
{\subsection{Teleportation based QEC}}
%%%%%%%%%%%%%%%%%%%%%
\label{ssec:teleporationQEC}

By consuming additional ancilla states, we can design a quantum error correction procedure for general errors that does not rely on the quantum Schur transform. The main idea is that we can bring our quantum state in the space $\mathcal P^{\textsf{T}}$ to the symmetric subspace by state teleportation that consumes a single logical ancilla state \cite[(7)]{ZLC00}. 
For this idea to work, we restrict ourselves to shifted gnu codes with odd $g$ and the following properties. 
\begin{enumerate}
    \item Define the unitary map 
    $X_{\rm schur} : |m_{\textsf{T}}\> \to |-m_{\textsf{T}}\>$ for every 
    ${m = -j_{\textsf{T}}, \dots, j_{\textsf{T}}}$ and SYT $\textsf{T}$.
    We need that $X_{\rm schur} $
    implements the logical $X$ gate on the $\textsf{T}$-codes for every correctible SYT $\textsf{T}$.
    \item 
    Define the unitary $\rm{C}_{\textsf{A}}{\rm X}_{\textsf{B}}$ on two registers $\textsf{A}$ and $\textsf{B}$ such that for any SYTs $\textsf{S}$ and $\textsf{T}$, the unitary $\rm{C}_{\textsf{A}}{\rm X}_{\textsf{B}}$ applies the map
    \begin{align}
   |l_{\textsf{S}}\>|m_{\textsf{T}}\> 
   \to 
|l_{\textsf{S}}\>   
X_{\rm schur}^{l}|m_{\textsf{T}}\> .
    \end{align}
\end{enumerate}
Here, registers $\textsf{A}$ and $\textsf{B}$ can potentially have different numbers of qubits. 

Next, we introduce a teleportation subroutine that takes an ancilla initialized as a shifted gnu code in register $\textsf{A}$ and the state $\rho$ that we want to teleport in register $\textsf{B}$. 
We assume that $\rho$ is a \textsf{T}-code that corresponds to some shifted gnu code.
The teleportation procedure entangles the registers and measures register $\textsf{B}$ to teleport the register in $\textsf{B}$ to the register in $\textsf{A}$.
We furthermore require that both shifted gnu codes encode a single logical qubit, and have the same gap $g$ that is also odd. We also need to be able to implement \texttt{ModuloMeas}.
\newline

\noindent{{\bf Procedure} \texttt{Teleportation}($\rho$):}
\begin{enumerate}
    \item Prepare $|+_L\>$ in a shifted gnu code in register $\textsf{A}$.
   \item Let $\rho$, a \textsf{T}-code, be in register $\textsf{B}$. 
\item Implement $\rm{C}_{\textsf{A}}{\rm X}_{\textsf{B}}$ on \textsf{A} and \textsf{B} respectively.
This effectively implements the logical CNOT with control on register \textsf{A} and target on register \textsf{B}.
\item 
On the state $\rho_\textsf{B}$ in register \textsf{B}, we implement Algorithm 2 (\texttt{ModuloMeas}).
Namely, we set 
$(\rho'_\textsf{B} , a) 
= \texttt{ModuloMeas}(\rho_\textsf{B}, 2g)$.
The output state is 
$\rho'_\textsf{A} \otimes \rho'_\textsf{B}$.
(This allows us to deduce the logical Z measurement outcome on the \textsf{T}-code in register \textsf{B}.)

\item 
Suppose that $\rho$ corresponds to a gnu code with shift $s$. Then set $\sigma = {\rm mod}(a + j_{\textsf T} -s, 2g)$.

\item
If $\sigma = 0, \dots, (g-1)/2$ or 
$\sigma = 2g - (g-1)/2, \dots, 2g-1$, then we declare that we obtained a logical 0 state on register \textsf{B}.
Then we do nothing to register \textsf{A}.

\item For all other values of $\sigma$, we declare that we obtained a logical 1 state on register \textsf{B}.
Then we apply $X_{\rm schur}$ on register $\textsf{A}$, implementing the logical $X$ correction on register \textsf{A}.
\end{enumerate} 
This teleportation subroutine which we depict in Figure \ref{fig:teleportation} is agnostic of which SYT syndome the state $\rho$ corresponds to, and how many qubits are in registers \textsf{A} and \textsf{B}.
We can use the teleportation subroutine to teleport \textsf{T}-codes to PI quantum codes.
Section \ref{sec:quantum-control} explains the physical implementation of steps in this teleportation subroutine.

\begin{figure}[t!]
\centering  
\includegraphics[width=0.5\textwidth]{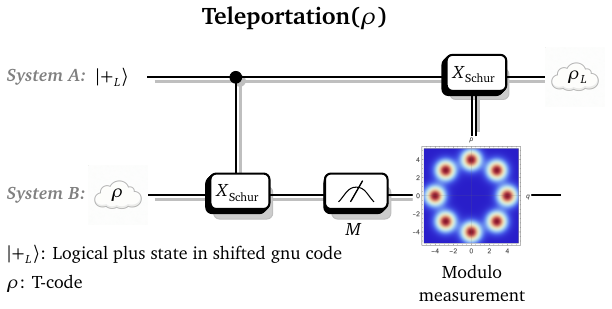} 
 \caption{
 % \yo{Refer to Figure 4.}
 {\bf Teleportation.}
 Schematic for the teleportation of a state $\rho$ that is a \textsf{T}-code to a state $\rho_L$ in a shifted gnu code, with the help of (1) an ancilla logical state $|+_L\>$ in the shifted gnu code, (2) a logical controlled-NOT gate $\rm{C}_{\textsf{A}}{\rm X}_{\textsf{B}}$ between code A and code B, (3) a modulo measurement, and (4) a conditional $X_{\rm schur}$ operation depending on the logical-$Z$ basis measurement outcome.
 }\label{fig:teleportation}
 \end{figure}

%
%For the purposes of correcting general errors, this teleportation subroutine takes as input a state supported on $\mathcal P^\textsf{T}$ in register $\textsf{B}$ for any SYT syndrome $\textsf{T}$, and teleports it to a state in the symmetric subspace of register $\textsf{A}$.
%More explicitly, any pure state in register $\textsf{B}$ with the decomposition
%$|\psi\> = \sum_{m} a_m|m_{\textsf{T}}\>$,
%the teleportation subroutine maps $|\psi\>$ to 
%$|\psi\> = \sum_{m} a_m|m_{\textsf{T'}}\> 
%= \sum_{m} a_m 
%|D^{N_{\textsf{A}}}_{m+N_{\textsf{A}}/2}\>$
%where $\textsf{T'}$ corresponds to the SYT that labels the symmetric subspace,
%and ${N_{\textsf{A}}}$ denotes the number of qubits in register $\textsf{A}$.

With this teleportation subroutine, 
we can also implement QEC for general errors
without employing quantum Schur transforms.
Namely, instead of using quantum Schur transforms to bring \textsf{T}-codes to the symmetric subspace, we can directly teleport $\textsf{T}$-codes to the original PI code.
% in Step 9 of Algorithm 3.

%The total number of control operations necessary for teleportation-based QEC is as follows. To obtain the SYT syndrome $\textsf{T}$, measurements of each of the $J_{[k]}^2$ operators for $1\leq k\leq N$ are required, each of which can be done with a single interaction round between the spins and the bosonic mode. Then the active steps of the teleportation procedure have the following cost: (1) $N$ GPGs for logical state preparation, (3) one GPG with linear coupling, (4) one GPG with quadratic coupling, (7) one transversal bit flip. The total number of control operations is then $2N+1$ interaction rounds between the spins and the bosonic mode and one transversal spin flip.

\iftoggle{science}
{\subsubsection{Simpler correction for deletion errors}  
}
{\subsubsection{Simpler correction for deletion errors}  
}
%%%%%%%%%%%%%%%%%%%%%%%%%%%%%%%%%%%%%%
%%%%%%%%%%%%%%%%%%%%%%%%%%%%%%%%%%%%%% 
 \label{ssec:correcting-deletions}
%%%%%%%%%%%%%%%%%%%%%

When deletion errors occur on PI quantum codes, the resultant state remains in the symmetric subspace, but on fewer qubits. If we accept having a quantum error corrected state with fewer qubits, we need not measure total angular momenta as we did in the previous section.
Rather, we can use a much simpler QEC scheme. 

For shifted gnu codes with $g$ larger than the number of deletions, the effect of deletions is to randomly shift the state into a linear combination of states supported on Dicke states of weights with different values modulo $g$. 
Then the error correction procedure is to first measure the Dicke weights of the states modulo $g$, then apply geometric phase gates to bring the state back into the codespace.

Now suppose that $t$ deletion errors occur on a PI code that corrects at least $t$ errors.
Then, an initially $N$-qubit pure state $|\psi\> = \alpha |0_L\> + \beta |1_L\>$ in the codespace 
becomes an $(N-t)$-qubit mixed state over subnormalized states 
$|\psi\>_a = \alpha |0_L\>_a + \beta |1_L\>_a$ \cite{ouyang2022finite}
where
\begin{align}
%|\psi\>_a = \sum_{w=0}^N a_w \delta[a \in A_w] |H^{N-t}_{w-a}\>.
|\psi\>_a = \sum_{w=a}^{N-t+a} a_w  
\frac{
\sqrt{\binom {N-t}{w-a}}
}
{
\sqrt{\binom {N}{w}}
}
|D^{N-t}_{w-a}\>.
\label{eq:psia-delete}
\end{align} 
Here, $a=0,\dots,t$ is the syndrome associated with the deletion error that quantifies the amount of shift in the Dicke weights.
When the PI code is furthermore supported on Dicke states with distinct weights at least $g$ apart with $g \ge t+1$, the states $|\psi\>_a$ and $|\psi\>_{a'}$ are orthogonal for distinct $a$ and $a'$.

Now, we will discuss performing QEC on the codespace without completely decoding the code, and thereby obtain a simpler QEC algorithm for deletion errors than that of Section \ref{sec:QEC}. 
%%%%%%%%%%
\begin{enumerate}
\item To obtain a particular subnormalized state 
$|\psi\>_a$ along with the syndrome $a$, we can 
set $\rho_a = |\psi\>_a\<\psi|_a/ \<\psi|_a |\psi\>_a$ and obtain $(\rho_a,a) = \texttt{ModuloMeas}(\rho, g)$.

\item 
We perform a non-unique unitary $V_a$ that maps the normalized versions of the states $|0_L\>_a$ and $|1_L\>_a$ to $|0_L\>$ and $|1_L\>$ respectively. 
For us, it suffices to implement $V_a$
%which acts only non-trivially on the symmetric subspace. A concrete way to perform this mapping 
using the $k=2$ dimensional subspace mapping described in Sec.\ref{subsec:stateprep}, which can be implemented using 
$\lceil 2(7Nk/3-1)\rceil$
linear GPGs and $\lceil 16 N/3\rceil$ transversal spin rotations.

\end{enumerate}
If the original PI code is a shifted gnu code with shift $s \ge t$, 
the resultant PI code is a shifted gnu code on $t$ fewer qubits, and with a shift $s - a$.

%%%%%%%%%%%%%%%%%%%%%%%%%%%%%%%%%%%%
\iftoggle{science}
{\subsection{Bosonic mode assisted QEC}
}
{\section{Bosonic mode assisted QEC}
}
\label{sec:quantum-control}
%%%%%%%%%%%%%%%%%%%%%%%%%%%%%%%%%%%%

Here, we describe how to implement QEC operations on symmetric states using quantum control of the spins together with a bosonic mode.

%%%%%%%%%%%%%%%%%%%%%%%%%%%%%%%%%%%%
\iftoggle{science}
{\subsubsection{Measurement of \texorpdfstring{$\hat{J}^2$}{J2}}
}
{\subsection{Measurement of \texorpdfstring{$\hat{J}^2$}{J2}}
}
\label{ssec:measure_total_angular_momentum}
%%%%%%%%%%%%%%%%%%%%%%%%%%%%%%%%%%%%
Measurement of total spin angular momentum involves measuring $\hat{J}^{x^2}+\hat{J}^{y^2}+\hat{J}^{z^2}=\hat{J}^2$ with eigenvalues $j(j+1)$. 
While $\hat{J}^2$ is clearly a Hermitian operator and in principle measurable, the actual physical construction of such an observable is not so straightforward. One possibility is to leverage recent results which show that singlet/triplet measurements on pairs of qubits can efficiently simulate universal quantum computation, or ${\bf STP=BQP}$ \cite{rudolph2023relational,Freedman2021symmetryprotected}. Once the number of pairwise qubit measurements becomes polynomially large compared to $J^2$, the accuracy converges exponentially.

Another way to 
measure $\hat J^2$ is to couple the spins to an ancillary bosonic mode and measure that. Consider the following Hamiltonian which generates a displacement of the position quadrature of the mode dependent on the total angular momentum
\begin{equation}
    \hat{H}=\hat{H}_0+\hat{H}_{\rm d}.
\end{equation}
Here, the free Hamiltonian of the mode and the spins is
\begin{equation}
    \hat{H}_0=\omega_c \hat{a}^{\dagger}\hat{a}+\omega_0 \hat{J}^z
    \label{freeH0}
\end{equation}
and the interaction term generating displacements is
\begin{equation}
    \hat{H}_{\rm d}=\xi_{\rm d}(\hat{a}^{\dagger}+\hat{a})\hat{J}^2,
    \label{Hdisp}
\end{equation}
where the mode creation and annihilation operators satisfy canonical commutation relations $[\hat{a},\hat{a}^{\dagger}]=1$.
In some physical systems it may be more natural for the mode to couple to the second moment of a particular component of spin, as in $\hat{H}'_{\rm d}=\xi_{\rm d}(\hat{a}^{\dagger}+\hat{a})\hat{J}^{z^2}$. In such a case one could approximate evolution by $\hat{H}_{\rm d}$ via a Trotterized expansion by a product of short time evolutions generated by $\hat{H}'_{\rm d}$ and conjugated by collective spin rotations around the $\hat{x}$ and $\hat{y}$ directions. 

In the simplest protocol, one begins with the mode prepared in the vacuum state, i.e. the coherent state $|\alpha=0\>$. Consider an initial spin state written as a superposition of basis states $\{|j,m\>\}$, where $j$ is the angular momentum, and $m$ is an eigenvalue of $\hat{J}^z$. After evolving the joint system according to $\hat{H}$ for a time $t$, the joint state is
\begin{equation}
\begin{array}{lll}
    e^{-i\hat{H}t}|\psi(0)\>&=&e^{-i\hat{H}t}\sum_{j,m}c_{j,m}
    |j,m\>\otimes |\alpha=0\>\\
    &=&\sum_{j,m}e^{-i \phi_j(t)}c_{j,m}|j,m\>|\alpha_j(t)\>,
    \end{array}
\end{equation}
where the combined dynamical and geometric phase is (see e.g. \cite{PhysRevB.105.094308} which also accounts for finite temperature and decay of the mode)
\[
\phi_{j,m}(t)=\omega_0 m t+\frac{\xi_{\rm d}^2}{\omega_c^2}(\omega_c t-\sin(\omega_c t))[j(j+1)]^2,
\]
and the displaced coherent state amplitude is
\[
\alpha_j(t)=-\frac{\xi_{\rm d}}{\omega_c}(1-e^{-i \omega_c t})j(j+1).
\]
The effect is the mode state becomes a displaced coherent state with the magnitude of displacement proportional to $j(j+1)$. The $\hat{q}$ mode quadrature can then be measured by homodyne detection. Choosing an interaction time satisfying $\tau=(2k+1)\pi/\omega_c $ for $k\in \mathbb{N}$, the minimum separation in phase space of the evolved coherent states amoung the different $\hat{J}$ eigenspaces is $\Delta \alpha_{\rm min}=|\alpha_1(\tau)-\alpha_0(\tau)|=4\xi_{\rm d}/\omega_c$. The overlap between these pointer states is $|\langle\alpha_0(\tau)|\alpha_1(\tau)\rangle|^2=e^{-\Delta\alpha_{\rm min}^2}$, hence for a nearly projective measurement it is required that $e^{-(4\xi_{\rm d}/\omega_c)^2}\ll 1$. If $\xi_{\rm d}/\omega_c$ is not large enough to satisfy this criterion, one could begin with a squeezed vacuum mode state with initial position variance $(\Delta \hat{X})^2=e^{-2r}/2$, in which case the criterion on distinguishability of the pointer states is
$e^{-(4e^r\xi_{\rm d }/\omega_c)^2}\ll 1$.
%In fact, thanks to the quadratic scaling of the eigenvalues of $\hat{J}^2$, if we assume that the initial spin state is dominated by components near maximum $j=n/2$, which is valid if only a constant number of amplitude damping or deletion errors occur, then the criterion for a projective measurement is $1/2\ll \gamma_{\rm d} t n$.

This primitive can be applied to compute the total angular momentum of any subset of spins and hence to project onto an SYT, which is required for correction of general errors. 

 \begin{figure}[h]
\centering  
\includegraphics[width=\columnwidth]{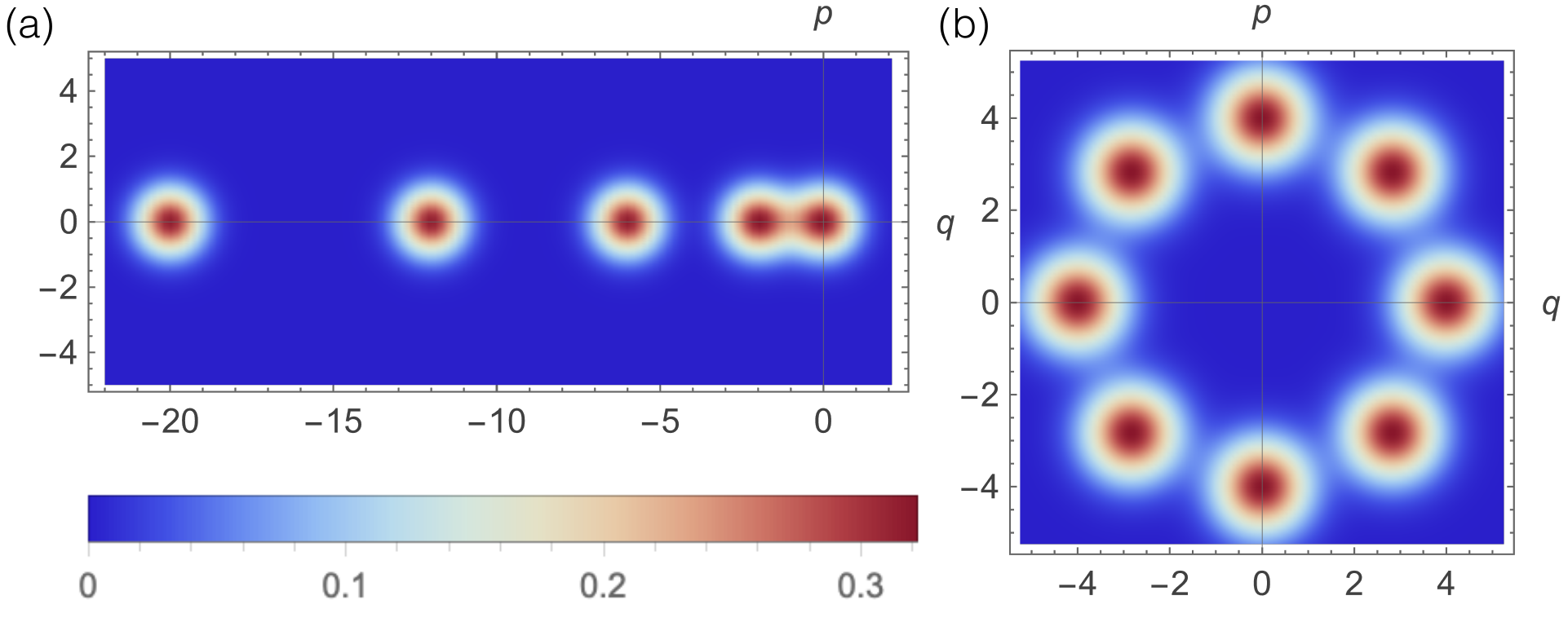} 
 \caption{{\bf Measurements on the bosonic mode.} Observables on spins mapped to quadratures of a bosonic mode as illustrated by plots of the Wigner distribution. (a) Measurement of $\hat{J}^2$. The mode is prepared in the vacuum $|\alpha=0\>$ and displaced by evolution generated by $\hat{H}_{\rm d}$ \eqref{Hdisp} over a time $\tau$ providing projection onto $j$ eigenspaces. Here $\omega_c\tau=(2k+1)\pi$ for $k\in \mathbb{N}$, $\xi_{\rm d}/\omega_c=1/2$, and $j\in\{0,1,2,3,4\}$. (b) Modular measurement of $\hat{J}^z$. The mode is prepared in a displaced vacuum state $|\alpha\>$ and rotated by evolution generated by $\hat{H}_{\rm r}$ \eqref{Hrot} providing projection onto Dicke weight $w\mod g$ eigenspaces. The evolution time is set to satisfy $\tau=2\pi/(g\xi_{\rm r})$. Here $g=8$ and $\alpha=4$.   
 }\label{fig:phasespace}
 \end{figure}

\iftoggle{science}
{\subsubsection{Modular measurement of \texorpdfstring{$\hat{J}^z$}{Jz}.}
}
{\subsection{Modular measurement of \texorpdfstring{$\hat{J}^z$}{Jz}.}
}
%\label{ssec:modulo}

In order to do the error detection for probe states using the gnu code, we need to perform a non-destructive measurement of the operator 
$\hat{J}^z+\frac{N}{2}{\bf 1}$ modulo 
$g\in \mathbb{Z}_{N+1}$.  
We focus on measurement of the Dicke weight w, in the symmetric subspace, but the argument follows for measurement in any SYT. Recall the Dicke space on $N$ spins as 
$\mathcal{H}={\rm span}_{\mathbb{C}}\{|D^N_w\>\}_{w=0}^{N}$. 

That is we want a projector onto the eigensubspaces $\{\mathcal{H}_p\}_{p=0}^{g-1}$ where $\mathcal{H}_p={\rm span}_{\mathbb{C}}\{|D^N_w\>;\ w\mod g=p\}$. 
One way to do this is to employ a dispersive interaction between a bosonic mode and the spins of the type
\begin{equation}
    \hat{H}_{\rm r}=\xi_{\rm r} \hat{J}^z \hat{a}^{\dagger}\hat{a}.
    \label{Hrot}
\end{equation} If we start with the bosonic mode prepared in the coherent state $|\alpha \>$, with $2\pi |\alpha|^2\gg g$, and evolve by the Hamiltonian $\hat{H}=\hat{H}_0+\hat{H}_{\rm r}$ for a time $t$
\begin{equation}
\begin{array}{lll}
    e^{-i\hat{H}t}|\psi(0)\>&=&e^{-i\hat{H}t}\sum_{j,m}c_{j,m}
    |j,\textsf{T},m\>\otimes |\alpha\>\\
    &=&\sum_{j,m}e^{-i \phi_m(t)}c_{j,m}|j,m\>|\alpha_m(t)\>,
    \end{array}
\end{equation}
where 
\[
\phi_m(t)=\omega_0 m t
\]
and
\[
\alpha_m(t)=e^{-it (\omega_c+m\xi_r) }.
\]
Starting with the mode in a displaced vacuum coherent state $|\alpha\rangle$, and evolving over a time
$\tau=2\pi/(g \xi_{\rm r})$, the mode state correlated with the spin state having Dicke weight $w$ is mapped to a coherent state rotated in phase space by an angle $\frac{2\pi}{g}\times (w\mod g)$ enabling a modular measurement of $\hat{J}^z$ by heterodyne measurement of mode. The minimum separation in phase space between the evolved coherent states is $\Delta \alpha_{\rm min}=|\alpha_{m}(\tau)-\alpha_{m\pm 1}(\tau)|=2|\alpha|\sin(\pi/g)$ and thus the condition for distinguishable pointer states is $e^{-4|\alpha|^2\sin^2(\pi/g)}\ll 1$. 

We have described the use of measurement of Dicke weight modulo $g$ for error detection where we don't want to disturb the logical information, but the same method works for {\bf Procedure} \texttt{Teleportation} where we do want to perform a projection onto logical states with the replacement $g\rightarrow 2g$. 
\newline

% \subsection{Shifting up the Dicke ladder}
% \label{ssec:shifting}
% Correction is not so straightforward as given an outcome $r$ from the above detection scheme we require an operation that injects exactly $r$ excitations into probe in a PI manner. Because of the equal energy splittings of the Dicke states, this can't be done directly using linear optical transformations. It would be possible by transferring an initial Fock state $|r\> $ prepared in the bosonic mode (resonant with the qubit energy splitting) and swapped into the probe using a Hamiltonian such as $\hat{H}_c=g_c (\hat{J}^{+})^r \hat{a}^r+h.c.$. However, this requires high nonlinearities for the mode preparation and interaction. Another possibility is to use interactions between the probe spins, e.g. dipole-dipole interactions in a ring geometry, to split the energies of the Dicke states in an $M$ dependent way and use frequency selective pulses to walk the probe states up the Dicke ladder \cite{Higgins2014}. However, such a scheme restricts the geometry of the setup, and also suffers errors from off resonant excitations and couplings outside the Dicke space induced by the dipole dipole interaction.
% A feasible scheme for correction is to use geometric control to perform a unitary extension of the mapping:$
% \{|D^N_{gw-r}\>\rightarrow |D^N_{gw}\>\}$. As shown in \cite{johnsson2020geometric}, arbitrary unitaries on the Dicke subspace can be synthesized using $O(N^2)$ geometric phase gates and global rotation gates. If only $s$ state mappings are non trivial then the complexity scales like $O(sN)$. 

\iftoggle{science}
{\subsubsection{State preparation of PI codes}}
{\subsection{State preparation of PI codes}}
\label{subsec:stateprep}

State preparation of any logical state of a PI code is equivalent to state synthesis in the Dicke subspace. There are now several protocols for exact \cite{johnsson2020geometric,PhysRevA.110.062610,k3bb-yfdv} and approximate \cite{PhysRevLett.132.153601} unitary and state synthesis in the Dicke subspace of $N$ qubits that employ coupling the qubits to a quantized bosonic mode. The work horse tool for quantum control is a composition of permutation invariant global rotations of the form $U=e^{i \theta \hat{J}^{\alpha}}$ about the axes $\alpha\in\{\hat{x},\hat{y},\hat{z}\}$ together with one or more types of geometric phase gates which are unitaries generated by a operator non-linear in the angular momentum component $\hat{J}^z$, with an action angle that depends on the area traversed in phase space by the mode mediating the interaction. The most commonly encountered GPG in experiments, used routinely now with trapped ions by coupling the electronic spin states to a motional mode of the ion string, is the M\o lmer-S\o rensen gate \cite{MSGate} $e^{i\phi \hat{J}^{z^2}}$. This gate has the advantage of being generated by coupling between the spins and the mode that is linear in the creation and annihilation operators and we will refer to such gates as linear GPGs. We shall see in Sec.\ref{ssec:teleport} that a dispersive GPG, obtained via a mode excitation number dependent coupling of spins to the mode, generates a higher non-linearity in $\hat{J}^z$ that is useful for teleportation. Beyond trapped ions, experiments have now demonstrated the essential steps needed to realize these GPS using: Rydberg atom qubits coupled to an optical mode of a Fabry-Perot cavity \cite{grinkemeyerErrordetectedQuantumOperations2025, hunger_fiber_2010-1, uphoff_frequency_2015} or utilizing circular Rydberg transitions coupled to a microwave cavity \cite{zhang2025opticallyaccessiblehighfinessemillimeterwave}. 

In Ref. \cite{ PhysRevResearch.7.L022072} it is shown that $\lceil{2N/3\rceil}$ linear GPGs and $\lceil{4N/3\rceil}$ transversal spin rotations suffice for exact state synthesis in the symmetric subspace, although in practice even a constant number of GPGs can approximate state preparation with high fidelity.
A noise optimized scheme for state and unitary synthesis in the Dicke space is given in Ref.\cite{k3bb-yfdv} where it is shown that for some states that have amplitudes sparse in the Dicke basis, high fidelity state synthesis can be achieved with a number of GPGs independent of $N$. The dispersive GPG can also be used for exact state preparation using at most $N-1$ GPGs and $N+1$ transversal spin rotations \cite{johnsson2020geometric}. 

Indeed this method can be extended to subspace mapping, i.e. the unitary synthesis of $k$ state mappings $\{|a_j\>\rightarrow|b_j\>\}_{j=1}^k$, where the set $S_a=\{|a_j\>\}$ is orthonormal as is $S_b=\{|b_j\>\}$, but the sets are not necessarily mutually orthogonal. As shown in Ref.~\cite{MBJD09}, it suffices to use $k$ instances of the following composition: a unitary state synthesis mapping, followed by the phasing on any symmetric state, e.g. the product state $|D^{N}_{N}\>$, followed by an inverse state mapping. Phasing of the Dicke state $|D^{N}_{N}\>$ is equivalent to the $C_{N-1}Z$ phase gate and is achieved using $N-1$ linear GPGs \cite{PhysRevA.110.062610}. Hence the overall cost of $k$ dimensional subspace mapping is 
$\lceil k(7Nk/3-1)\rceil$
linear GPGs and $\lceil 8 Nk/3\rceil$ transversal spin rotations. In the case $k=N+1$ we have full unitary synthesis over the Dicke subspace.

\iftoggle{science}
{\subsubsection{Operations for teleportation}}
{\subsection{Operations for teleportation}}
\label{ssec:teleport}

%^\GKB{The requisite operations for teleportation are described in Sec. \ref{ssec:teleportationQEC}. 
Since we have already described the resources required to prepare the state $|+_L\>$ and to perform the modular measurement of $\hat{J}^z$, the remaining requirement for teleporation based correction is to implement the controlled not gate
$\rm{C}_{\textsf{A}}{\rm X}_{\textsf{B}}$ between an ancilla system {\textsf{A}} and the register system {\textsf{B}}, the latter of which has undergone error and been projected onto a SYT ${\textsf{T}}$. 
Recall the action of $X_{\rm schur}$ on an eigenstate of $\hat{J}^z$ in an SYT labeled ${\textsf{T}}$ is to reverse the sign on the eigenstates of $\hat{J}^z$. We use the following definition, which is a bit flip up to a phase $i$:
\[
X_{\rm schur}|m_{\textsf{T}}\>=i|-m_{\textsf{T}}\>.
\]
This is a transversal bit flip operation since for $X_{\rm schur}=i\prod_{j=1}^M X_j$, then
$X_{\rm schur}^{\dagger}\hat{J}^zX_{\rm schur}=-\hat{J}^z$.
The definition with the phase $i$ is convenient for the boson mediated gate below and the phase can be corrected for as a Clifford gate correction in subsequent operations or by instead performing state preparation on system $\textsf{A}$ to the target $\frac{1}{\sqrt{2}}(|0_L\rangle-i|1_L\rangle)$.

%\label{ssec:GPG}
The controlled not gate
 can be realized using a dispersive GPG  \cite{PhysRevA.65.032327}. The method to do so is fully described in Appendix G2 of Ref.\cite{ouyang2025measurement} where it is shown how to implement the operation $i\prod_{j\in B}Z_j$ on a target register conditioned on the control register being the state $|1_L\rangle$ of a PI code. Because this product operator is equal to $X_{\rm schur}$ up to conjugation by a transversal spin rotation, the following sequence suffices: 
\begin{equation}
{\rm{C}_{\textsf{A}}{\rm X}_{\textsf{B}}} = e^{-i\frac{\pi}{2} \hat{J}_B^y} 
\Lambda_A(U_{\rm nl}-{\rm GPG}(\theta,\phi,\chi))e^{i\frac{\pi}{2} \hat{J}_B^y},  
\end{equation}
where
\begin{equation}
\begin{array}{lll}
\Lambda_A(U_{\rm nl}-{\rm GPG}(\theta,\phi,\chi))&=&\Lambda_A(-\beta,\delta)R(\theta \hat{J}_B^z)\\
&&\Lambda_A(-\alpha,\delta)R(-\theta \hat{J}_B^z)\\
&&\Lambda_A(\beta,\delta)R(\theta \hat{J}_B^z)\Lambda_A(\alpha,\delta)\\
&&R(-\theta \hat{J}_B^z).
\end{array}
\end{equation}
 Here $\chi=|\alpha\beta||1-e^{2 i \delta}|$ and $\phi=\arg(\alpha)-\arg(\beta)$. The conditional rotational operator generated by a dispersive coupling to the mode is
$ R(\theta \hat{J}_B^z)=e^{i\theta \hat{J}^z_B \hat{a}^{\dagger}\hat{a}}$,
and the conditional mode rotation is 
\begin{equation}
\Lambda_A(\alpha,\delta)=D(\alpha)R(\theta \hat{J}_A^z) D(-\alpha)R(-\theta \hat{J}_A^z), 
\end{equation}
where the mode displacement operator is
$D(\alpha)=e^{\alpha a^{\dagger}-\alpha^{\ast}a}$. The specific choices for the parameters in these gates depends on the PI code used (see \cite{ouyang2025measurement} for details). For example, to teleport between $7$ qubit PI codes, where the $|0_L\rangle$ and $|1_L\rangle$ states are a superposition of Dicke states with Hamming weight $0\mod q$ and $s\mod q$ respectively, with $s=2,q=5$, one chooses $\delta=s\pi/q$, $\phi=0$, $\theta=\pi$, and $\alpha=\sqrt{\pi}/(4|\sin(\delta)|)$. In total, the number of elementary spin-mode coupling gates $R(\theta \hat{J}_B^z)$ is $12$ and the number of transverse spin rotations is $2$.

\iftoggle{science}
{\subsection{Complexity}}
{\section{Complexity}}

We can count the cost in elementary control operations to perform QEC on PI codes using either the teleporation based or gate based approach. Both require measurement of the SYT syndrome $\textsf{T}$ which involves $N$ measurements of total spin angular momentum on subsets of spins. The operators commute but the physical realization described in Sec.\ref{ssec:measure_total_angular_momentum} takes $N-1$ steps. 
 
%The total number of control operations necessary for teleportation-based QEC is as follows. %To obtain the SYT syndrome $\textsf{T}$, measurements of each of the $J_{[k]}^2$ operators for $1\leq k\leq N$ are required, each of which can be done with a single interaction round between the spins and the bosonic mode. 
The active steps of {\bf Procedure} \texttt{Teleportation}, after measurement of the $\textsf{T}$ syndrome, have the following cost: $\lceil{2N/3\rceil}$ linear GPGs and $\lceil{4N/3\rceil}$ transveral spin rotations for logical state preparation, one dispersive GPG and two transversal spin rotations  for the $\rm{C}_{\textsf{A}}{\rm X}_{\textsf{B}}$ gate, one instance of \texttt{ModuloMeas}, and finally one transversal bit flip for $X_{\rm schur}$. The number of elementary operations required is then: $\lceil{2N/3\rceil}$ linear GPGs, one dispersive GPG (consisting of $12$ dispersive spin-mode coupling gates), $\lceil{4N/3\rceil}+3$ transversal spin rotations, and one measurement that can be performed using a dispersive spin coupling with heterodyne measurement. Only  addressability on the entire code register A vs. B during transversal spin rotations and coupling to the common bosonic mode is required. 

The gate based approach uses part of the inverse quantum Schur transform for decoding. To perform this algorithm on $N$ qubits to an accuracy of $\epsilon$, it suffices to use $O(t \log(N/\epsilon))$ gates \cite{burchardt2025highdimensionalquantumschurtransforms}, where $t$ is the number of boxes in the second row of the SYT syndrome.
Typically for PI codes, $t$ is sublinear in $N$, and hence the inverse Schur transform contribution to the complexity of the decoding circuit is sublinear in $N$. 
\section{Discussion} 
   
Permutation invariant qubit codes have appealing features that arise intrinsically due to their symmetry under qubit relabeling, like the fact that error locations do not need to be tracked to be corrected. This symmetry makes the method for correction very different from stabilizer codes which rely on measuring parity constraints on qubits. 
%PI codes are a promising family of quantum error correction codes because their permutational symmetry appears advantageous for near-term implementation. 
%An important step that would bring PI codes closer to physical realization is the design of explicit quantum error correction protocols for them. 
%However, since PI codes are generally not stabilizer codes, the usual quantum error correction formalism does not allow us to derive quantum error correction protocols for them.  
By leveraging the representation theory of the symmetric group, we show that the appropriate measurement needed is onto a symmetric young tableau to isolate how the state left the symmetric subspace under an error together with modular measurement of the Dicke weights which allows to infer how excitations were added or subtracted from the state without revealing logical information. All the steps for syndrome extraction and correction can be done in a number of elementary steps that is linear in the number of qubits $N$. By expanding the gate library beyond the usual single and two qubit gate sets to include geometric phase gates, some of the QEC steps are simplified and require less addressability thus avoiding associated cross-talk errors. 
  
% For the correction of general errors in a PI code, we introduce a general method to measure and decode the errors. For this, we can use the full theory of syndrome extraction and decoding for arbitrary correctible errors on arbitrary PI codes.  

\section{Acknowledgements}\label{eq:acknow}
Y.O. acknowledges support from EPSRC Grant No. EP/W028115/1 and also the EPSRC funded QCI3 Hub under Grant No. EP/Z53318X/1.  
G.K.B. acknowledges support from the Australian Research Council Centre of Excellence for Engineered Quantum Systems (Grant No. CE 170100009).

\iftoggle{science}
{\section{Materials and Methods}
\subsection{Young diagrams and Young tableau}
\label{sssec:YDYT}
%%%%%%%%%%%%%%%%%%%%%%%%%%%%%%%
Consider Young diagrams \cite[Page 29]{stanley-enumerative-v1} comprising of $N$ boxes arranged in two left-justified rows.
We restrict our attention to Young diagrams with $r_1$ boxes on the first row and $r_2$ boxes on the second row where $r_1 \ge r_2$ and $r_1 + r_2 = N$. 
\begin{align}
\includegraphics[width=0.5\textwidth]{graphics/YoungTableau.pdf} 
\label{eq:YT}
\end{align}
In \eqref{eq:YT}(a), we depict a Young diagram with four boxes on the first row, and two boxes on the second row. 
Each Young diagram corresponds to an integer partition of $N$ with two parts.
A standard Young Tableau (SYT) is obtained by filling up the $N$ boxes in a Young diagram with integers from 1 to $N$ such that the integers strictly increase from left to right, and from top to bottom.
Given a Young diagram, for instance in \eqref{eq:YT}(a), we give two examples of SYTs that can be obtained in \eqref{eq:YT}(b) and \eqref{eq:YT}(c) respectively.
We can enumerate the number of SYTs consistent with a given Young diagram using the hook-length formula \cite[Corollary 7.21.6]{stanley-enumerative-v2}.
The hook-length formula states that the number of SYTs consistent with any Young diagram with $N$ boxes is equal to $N!$ divided by the hook-length of each box. The hook-length of a given box is the total number of boxes in its hook, where the box's hook includes the box itself and all other boxes to its right and bottom. 
\begin{align}
\includegraphics[width=0.5\textwidth]{graphics/HookLength_diagrams.pdf} 
\label{eq:hooklengths}
\end{align}
In \eqref{eq:hooklengths}, we shade the hooks of the labelled cells. 
In \eqref{eq:hooklengths}(a), (b) and (c), the hook-lengths are five, four and two respectively.
From the hook-length formula, the number of SYTs corresponding to a Young diagram with two rows is
\begin{align}
&\frac{N!}{(N-r_1)! (r_1+1)!/ (r_1-N+r_1+1)}\notag\\
=& \binom N{r_1} \frac{2r_1 - N + 1}{r_1+1}.\label{eq:num_SYT}
\end{align}

We can also represent SYTs as a binary vector, using the so-called Young-Yamanouchi basis. 
The binary vector $(x_1,\dots, x_N)$
is such that $x_j=1$ if the symbol $j$ in the SYT is in the second row, and $x_j=0$ if the symbol $j$ in the SYT is in the first row. Operationally, $x_j=0$ if the measured total angular momentum increases, and $x_j=1$ if the measured total angular momentum decreases.

Apart from SYTs, we also consider semistandard Young tableau (SSYT), where boxes are filled with integers that strictly increase from top to bottom and are non-decreasing from left to right. 
Here, we restrict our attention to SSYTs filled with the numbers 1 and 2. 
The number of such SSYTs obtainable from a Young diagram with two rows is 
\begin{align}
r_1-r_2+1.\label{eq:num_SSYT}
\end{align} 
We list all SSYTs filled with the numbers 1 and 2 when $r_1 = 4$ and $r_2=2$ in \eqref{eq:SSYTs} below.
%%%%%%%%%%%%%
 \begin{align}
\includegraphics[width=0.5\textwidth]{graphics/semistandard_YT.pdf} 
\label{eq:SSYTs}
 \end{align}
%%%%%%%%%%%%% 
From Schur-Weyl duality \cite[Chapter 9.1]{goodman2000representations} applied to quantum information theory \cite{BCH-PRL-2006-schur,harrow2013church}, we know that 
the $N$-qubit space $(\mathbb C^2)^{\otimes N}$ is isomorphic to 
\begin{align}
\bigoplus_{D} \mathcal Q^D \otimes \mathcal P^D, \label{eq:schurweyl}
\end{align}
where each $D$ denotes a Young diagram with $N$ boxes and two rows.
For every Young diagram $D$, $\mathcal Q^D$ is a space with basis elements labelled by all possible SYTs filled with integers 1 to $N$, and $\mathcal P^D$ is a space with basis elements labelled by all possible SSYTs filled with integers 1 and 2.
Since the basis of $\mathcal Q^D$ is labeled by SYTs of shape $D$, we can write $\mathcal Q^D = {\rm span}\{|\textsf{T}\> : \textsf{T} \mbox{ is a SYT for } D\}$.

Since $D$ always has two rows, we can represent it with the tuple $(r_1, r_2)$ where $r_1$ and $r_2$ count the number of boxes in the first and second rows respectively.
When $D=(N,0)$, we can see from \eqref{eq:num_SYT} and \eqref{eq:num_SSYT} that the dimension of $\mathcal Q^{(N,0)}$ is one, and the dimension of $\mathcal P^{(N,0)}$ is $N+1$. In fact, $\mathcal P^{(N,0)}$ corresponds precisely to the symmetric subspace of an $N$-qubit symmetric state.

\subsection{Symmetrizing lemma}

When a channel acts identically and independently on every qubit, it 
always maps a pure symmetric state to a density matrix that is block diagonal on the spaces $\mathcal Q^D \otimes \mathcal P^D$ as given \eqref{eq:schurweyl}.
For such block diagonal states, QEC can proceed by projecting the density matrix onto one of the blocks labelled by $D$.

However the density matrices that we encounter in the QEC of PI codes do not necessarily have this block diagonal structure.
In this scenario, such density matrices can always be made block diagonal in the Schur-Weyl basis by applying symmetrizing operations described by the quantum channel $\mathcal S$ that randomly permutes qubits, and has Kraus operators $\{\frac{1}{\sqrt N!} P_{\sigma} : \sigma \in S_N \}$,
where $P_\sigma$ denotes an $N$-qubit matrix representation of a permutation operator that permutes qubit labels according to the permutation $\sigma$.
In the following lemma, we prove that if we apply the symmetrizing channel, originally correctible errors remain correctible. 
%%%%%%%%%%%%%%%%%%
\begin{lemma}[Symmetrizing lemma]
\label{lem:sym}
Let $\mathcal C$ be any $N$-qubit PI code of distance $d$. Let $\mathcal N$ be any quantum channel with Kraus operators $K$ of weight at most $t$.
Then if $d \ge 2t+1$, the channels $\mathcal N$ and $\mathcal S \circ \mathcal N$ are both correctible with respect to $\mathcal C$.
\end{lemma}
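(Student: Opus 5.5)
The plan is to reduce both claims to the Knill--Laflamme condition for Pauli operators of weight at most $2t$. We use this condition as the definition of the code having distance $d\ge 2t+1$. Let $\Pi$ denote the code projector.

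For $\mathcal N$ itself, take any Kraus operators $A,B\in K$. Each has weight at most $t$, so each expands in Pauli operators of weight at most $t$:
\[
A=\sum_{P} a_P P,\qquad B=\sum_Q b_Q Q .
\]
Every product $P^\dagger Q$ is, up to phase, a Pauli of weight at most $2t\le d-1$. The distance hypothesis then gives $\Pi P^\dagger Q\Pi = c_{P,Q}\Pi$. Summing with the coefficients yields $\Pi A^\dagger B\Pi = g_{A,B}\Pi$, which is exactly correctibility of $\mathcal N$.

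For $\mathcal S\circ\mathcal N$, the Kraus operators are $\frac{1}{\sqrt{N!}}P_\sigma A$ with $\sigma\in S_N$ and $A\in K$. We must show that $\Pi A^\dagger P_\sigma^\dagger P_\tau B\Pi$ is proportional to $\Pi$ for all $\sigma,\tau,A,B$.

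The key observation is that $\Pi$ projects into the symmetric subspace. Hence $P_\pi\Pi=\Pi P_\pi=\Pi$ for every permutation $\pi$. Using this, we insert $P_\sigma^\dagger$ on the left and $P_\tau^\dagger$ on the right, which rewrites the expression as
\[
\Pi\,(P_\sigma A P_\sigma^\dagger)^\dagger (P_\tau B P_\tau^\dagger)\,\Pi .
\]
Conjugating by a permutation only relabels qubit supports. So $A_\sigma=P_\sigma AP_\sigma^\dagger$ and $B_\tau=P_\tau BP_\tau^\dagger$ are again operators of weight at most $t$. The argument of the previous paragraph, applied to $A_\sigma,B_\tau$, then gives the constant $g_{A,B,\sigma,\tau}$.

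The only step needing care is the rewriting identity. Concretely, we write
\[
\Pi A^\dagger P_\sigma^\dagger P_\tau B\Pi=\Pi P_\sigma A^\dagger P_\sigma^\dagger\, P_\tau B P_\tau^\dagger\Pi,
\]
which uses $\Pi=\Pi P_\sigma$ on the left and $\Pi=P_\tau^\dagger\Pi$ on the right. This is immediate from the invariance of $\Pi$ under permutations.

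One small point to check is that weight, meaning the size of the support on which an operator acts nontrivially, is preserved under conjugation by $P_\sigma$. This is clear, since the conjugation maps the support $S$ to $\sigma(S)$. I expect no real obstacle. The main thing is to state the Knill--Laflamme form of distance cleanly, namely that $\Pi E\Pi\propto\Pi$ for all $E$ of weight less than $d$, and to note that the resulting constants do not depend on the logical state.
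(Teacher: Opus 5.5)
Your proposal is correct and follows essentially the same route as the paper. You use $P_\sigma\Pi=\Pi P_\sigma=\Pi$ to rewrite $\Pi A^\dagger P_\sigma^\dagger P_\tau B\Pi$ as $\Pi A_\sigma^\dagger B_\tau\Pi$, where the conjugated operators still have weight at most $t$, then expand in low-weight Paulis and apply the Knill--Laflamme distance condition. The only difference is that you also spell out why $\mathcal N$ itself is correctible, which the paper simply takes as given from $d\ge 2t+1$.
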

%%%%%%%%%%%%%%%%%%
%%%%%%%%%%%%%%%%%%
\begin{proof}
The channel $\mathcal N$ is correctible, and hence satisfies the fundamental quantum error correction criterion \cite{KnL97}. To prove the lemma, we must show that the channel $\mathcal S \circ \mathcal N$ is also correctible.

%Now let $K$ denote a set of Kraus operators for the quantum channel $\mathcal N$.
Now, denote $\bar K = \{\frac{1}{\sqrt N!} P_{\sigma} A : \sigma \in S_N , A \in K \}$ as a set of Kraus operators for the quantum channel $\mathcal S \circ \mathcal N$. 
The Kraus operators in $\bar K$ are correctible if and only if for every $\sigma, \tau \in S_N$ and $A,B \in K$, there exists a $g_{A,B,\sigma,\tau} \in \mathbb C$ such that 
\begin{align}
\Pi  A^\dagger P_{\sigma}^\dagger  P_\tau B  \Pi
= g_{A,B,\sigma, \tau} \Pi ,
\label{qecc_of_SN}
\end{align}
where $\Pi$ the code projector for $\mathcal C$.

Since $\Pi$ is a projector onto the symmetric subspace, for all $\sigma \in S_n$,
we have that $P_\sigma \Pi = \Pi P_\sigma = \Pi.$
Denoting $A_\sigma = P_{\sigma} A P_{\sigma}^\dagger $
and $B_\sigma = P_{\tau} B P_{\tau}^\dagger$, note that \eqref{qecc_of_SN} is equivalent to 
\begin{align}
\Pi  A_\sigma^\dagger  B_\tau  \Pi = g_{A,B,\sigma, \tau}\Pi .
\label{qecc_of_SN2}
\end{align}
Since $A,B$ are operators with weight at most $t$, 
$A_\sigma$ and $B_\tau$ must also be operators of weight at most $t$. 
Hence, both $A_\sigma$ and $B_\tau$ are linear combinations of Pauli operators of weight at most $t$. 
Namely, 
\begin{align}
A_\sigma = \sum_{P : |P|\le t } a_{A,P,\sigma} P,
\quad
B_\tau     = \sum_{P : |Q|\le t } b_{B,Q,\sigma} P,
\end{align}
where $a_P$ and $a_Q$ are real coefficients. 
From this, it follows that the left side of \eqref{qecc_of_SN2} is equivalent to 
\begin{align}
\Pi  A_\sigma^\dagger  B_\tau  \Pi  = \sum_{P,Q} a_{A,P,\sigma}^* b_{B,Q,\sigma}  \Pi  P Q \Pi.
\label{qecc_of_SN3}
\end{align}
From the Knill-Laflamme condition \cite{KnL97},
since $\mathcal C$ is a code of distance $d$, for every Pauli $P$ and $Q$ of weight at most $t$, there exists a $c_{P,Q} \in \mathbb C$ such that 
\begin{align}
\Pi  P Q \Pi = c_{P,Q} \Pi.
\label{distance-KL}
\end{align}
Substituting \eqref{distance-KL} into the left side of \eqref{qecc_of_SN3},
we can conclude that 
\begin{align}
\Pi  A_\sigma^\dagger  B_\tau  \Pi 
= \sum_{P,Q} a_{A,P,\sigma}^* b_{B,Q,\sigma}  c_{P,Q}\Pi
\label{qecc_of_SN4}.
\end{align}
This implies that 
$\Pi  A^\dagger P_{\sigma}^\dagger  P_\tau B  \Pi
= g_{A,B,\sigma, \tau} \Pi ,$
where $
% \begin{align}
g_{A,B,\sigma, \tau} = \sum_{P,Q} a_{A,P,\sigma}^* b_{B,Q,\sigma}  c_{P,Q},
% \end{align}
$
and this proves that $\mathcal S \circ \mathcal N$ is also correctible with respect to the code $\mathcal C$.
\end{proof}
Lemma \ref{lem:sym} hints how we can perform quantum error correction on any PI quantum code.
Namely, if any correctible channel $\mathcal N$ introduces errors on a PI quantum code, 
we can project the state into the Schur-Weyl basis and still be able to correct the resultant errors. 
This is because a symmetrizing channel $\mathcal S$ makes a quantum state block-diagonal in the Schur-Weyl basis, and Lemma \ref{lem:sym} tells us that if $\mathcal N$ is correctible, $\mathcal S \circ \mathcal N$ is also correctible. 
Note that we do not have to actually physically symmetrize the channel, if we randomize the labels in our minds.
Section \ref{sssec:syndrome-extraction} illustrates how we may project the state onto the diagonal blocks in the Schur-Weyl basis.
%%%%%%%%%%%%%%%%%%
}
{}
%\input{methods-text}
%
%\appendix
%\input{SI_material}

\bibliography{ref}{}

\end{document}